\documentclass[11pt]{article}
\usepackage{fullpage}
\usepackage{amssymb,latexsym,amsmath}     
\usepackage [latin1]{inputenc}
\usepackage{graphicx}
\usepackage{dsfont}
\usepackage{amsthm}
\usepackage{color}
\usepackage{float,todonotes}
\usepackage{amsmath}
\usepackage{comment}
\usepackage{tcolorbox}
\usepackage[colorlinks=true,linkcolor=blue, citecolor=blue,urlcolor=blue]{hyperref}
  
\usepackage{algorithm}
\usepackage{algorithmic}

\theoremstyle{definition}
\newtheorem{definition}{Definition}[section]
\theoremstyle{plain}
\newtheorem{theorem}{Theorem}[section]
\newtheorem{lemma}{Lemma}[section]
\newtheorem{claim}{Claim}[section]
\newtheorem{observation}{Observation}

\theoremstyle{remark}

\newtheorem*{observation*}{Observation}

\usepackage{xcolor}
\usepackage{tikz}
\usetikzlibrary{arrows.meta,calc,positioning}

\definecolor{fcBlue}{HTML}{2563A6}
\definecolor{fcInk}{HTML}{243247}
\definecolor{fcMuted}{HTML}{768494}
\definecolor{fcRule}{HTML}{CFD8E3}

\tikzset{
  fc base/.style={
    font=\small,
    line cap=round,
    line join=round
  },
  fc heading/.style={
    font=\sffamily\bfseries\small,
    text=fcInk
  },
  fc note/.style={
    font=\sffamily\scriptsize,
    text=fcMuted
  },
  fc arrow/.style={
    -{Latex[length=2mm,width=1.3mm]},
    draw=fcBlue,
    line width=0.65pt
  },
  fc dimension/.style={
    {Latex[length=1.5mm]}-{Latex[length=1.5mm]},
    draw=fcMuted,
    line width=0.55pt
  }
}

\begin{document}
\title{Subexponential Parameterized Approximation for Group-Fair Euclidean \texorpdfstring{$k$}{k}-Center}

\author {
    Rudranarayan Kar\textsuperscript{\rm 1,2},
    Diya Roy\textsuperscript{\rm 1,2},
    Abhishek Sahu\textsuperscript{\rm 1,2}\\
    \small\textsuperscript{1} School of Computer Sciences, National Institute of Science Education and Research, Bhubaneswar\\
    \small\textsuperscript{2} Homi Bhabha National Institute (HBNI), Training School Complex, Anushakti Nagar, Mumbai, India
}
\date{}

\maketitle
\begin{abstract}
 Clustering is a basic tool for grouping data into meaningful classes, but in many applications the resulting clusters must also respect the fairness conditions among different groups. We study this direction for the group-fair Euclidean $k$-center problem in fixed dimension, where the input points are partitioned into disjoint color classes, and every cluster must satisfy given lower and upper bounds on the fraction of each color. 

 We give a separator-based recursive approximation algorithm for the continuous-center setting. The main idea is to first reduce the continuous search space to a finite candidate center set and then use the geometric sphere separators to break the instance into smaller subproblems. 

 Our algorithm gives a bicriteria approximation: the clustering radius is within a factor of $(1+\epsilon)$ of the optimum exactly fair solution, while the fairness bounds are relaxed only by a multiplicative factor. For fixed dimension, number of colors, and approximation parameters, the running time is polynomial in input size and subexponential in $k$. A recent IJCAI-ECAI2026 result gives a $(1+\epsilon)$-approximation for group-fair $k$-center in bounded doubling metrics with running time $(k/\epsilon)^{\mathcal{O}(k)}n^{\mathcal{O}(1)}$. In the fixed-dimensional Euclidean setting considered here, we improve the dependence on $k$ to subexponential, while allowing continuous centers. The fairness guarantees of the two results are different: the previous work allows an additive fairness violation, whereas our algorithm uses a controlled multiplicative relaxation.
\end{abstract}

\section{Introduction}
Clustering is a basic tool for dividing a dataset into meaningful groups. Among the classical clustering objectives, the $k$-center problem aims to select at most $k$ centers and assign every point to one of them so that the maximum assignment distance is minimized. In general metrics, the problem is NP-hard and admits a polynomial-time $2$-approximation, which is optimal unless $\mathrm{P}=\mathrm{NP}$ \cite{DBLP:journals/tcs/Gonzalez85,DBLP:journals/mor/HochbaumS85}. Computing an optimal solution is also hard when parameterized by the number $k$ of centers \cite{DBLP:journals/talg/DemaineFHT05}. Many variants of $k$-center have therefore been studied, including clustering with outliers \cite{DBLP:conf/soda/CharikarKMN01}, anonymity constraints \cite{DBLP:journals/talg/AggarwalPFTKKZ10}, capacities \cite{DBLP:conf/focs/CyganHK12}, lower bounds \cite{DBLP:journals/tcs/LiangWXF25}, and fairness constraints.

A clustering obtained only by minimizing distances may represent some demographic groups poorly. For example, a cluster may contain very few points from a minority group even when that group forms a significant part of the complete dataset. This has led to several notions of fairness in clustering, including group fairness \cite{DBLP:conf/nips/Chierichetti0LV17,DBLP:conf/nips/BeraCFN19,DBLP:conf/approx/Bercea0KKRS019}, proportional fairness \cite{DBLP:conf/icml/ChenFLM19,DBLP:conf/icalp/Micha020}, and individual fairness \cite{DBLP:conf/icml/MahabadiV20,DBLP:conf/nips/NegahbaniC21}. In this paper, we focus on group fairness. Under group fairness, every point belongs to a demographic group, represented by a color. For every color $i$, lower and upper bounds $\alpha_i$ and $\beta_i$ are given, and the proportion of color $i$ in every nonempty cluster must lie between these bounds. Thus, fairness depends not only on the selected centers but also on the assignment of points to them. In particular, assigning every point to its nearest center may fail to produce a fair solution.

Chierichetti et al.\cite{DBLP:conf/nips/Chierichetti0LV17} initiated the study of fair clustering for two colors and introduced the fairlet framework. A fairlet is a small group of points that already satisfies the required balance condition. Their approach first partitions the input into fairlets and then applies a standard clustering algorithm to these fairlets. The group-fairness model was later generalized to multiple colors and lower and upper proportion bounds. Bercea et al. \cite{DBLP:conf/approx/Bercea0KKRS019} obtained a $5$-approximation for exact ratio preservation and a $3$-approximation under essential fairness, where the number of points of each color may differ by at most one from a fair fractional assignment. Ahmadian et al. \cite{DBLP:conf/kdd/AhmadianE0M19} considered upper-bound representation constraints and gave a $3$-approximation with a bounded additive fairness violation. Bera et al.\cite{DBLP:conf/nips/BeraCFN19} developed a more general framework that also allows protected groups to overlap. Let $\Delta$ be the maximum number of groups to which any point belongs. For fair $k$-center, their framework gives a $4$-approximation with an additive fairness violation of at most $4\Delta+3$. Harb and Lam \cite{DBLP:conf/nips/HarbL20} subsequently obtained a scalable $3$-approximation while retaining the $4\Delta+3$ violation for overlapping groups. For disjoint groups, smaller additive violations are possible. These results show that overlapping groups make the assignment problem substantially more difficult. The present paper considers disjoint color classes.

Parameterized algorithms provide another way to obtain stronger approximation guarantees. Goyal and Jaiswal \cite{DBLP:journals/tcs/GoyalJ23} gave a $2$-approximation for constrained $k$-center, including group-fair $k$-center, in FPT time parameterized by the number of centers and colors. They also showed that an approximation factor below $2$ cannot be obtained in FPT time for general metrics under standard parameterized-complexity assumptions. Fair coresets have also been studied for the $k$-median and $k$-means objectives \cite{DBLP:conf/icalp/BandyapadhyayFS21}. Moreover, even when the centers are fixed, finding a fair assignment can remain computationally difficult \cite{DBLP:conf/nips/EsmaeiliBSD21}. This explains why the assignment step cannot be treated as a simple nearest-center assignment. Several works have considered fair $k$-center in structured, distributed, and low-dimensional metric spaces. Ceccarello et al. \cite{DBLP:conf/www/CeccarelloPP24} studied fast fair $k$-center algorithms in doubling metrics, while Wu et al. \cite{DBLP:conf/atal/WuFH0024} considered the problem in a distributed setting. In fixed-dimensional Euclidean space, Moritaka et al. \cite{DBLP:journals/corr/abs-2511-07660} studied fair $k$-center in the plane with absolute lower and upper bounds on the number of points of each color. Their fairness requirements differ from the proportional bounds considered here.

The result closest to ours is the recent work of Wu et al. \cite{ijcai2026p92} on fair clustering in bounded doubling metrics. For group-fair $k$-center, they use the same lower- and upper-bound proportion model considered in this paper and obtain a $(1+\epsilon)$-approximation for the radius with an additive fairness violation of one. Their algorithm runs in $\left(\frac{k}{\epsilon}\right)^{\mathcal{O}(k)}n^{\mathcal{O}(1)}$ time. Their paper also studies data-summarization fairness, whereas we focus only on group fairness. Moreover, their centers are selected from the input points, while our centers may be placed anywhere in the Euclidean space.

Subexponential parameterized algorithms are already known for the ordinary Euclidean $k$-center problem. In particular, geometric separator methods lead to running times whose exponents involve $k^{1-1/d}$ in dimension $d$ \cite{DBLP:journals/algorithmica/AgarwalP02}. Closely related lower bounds are known for the discrete Euclidean problem \cite{DBLP:conf/compgeom/ChitnisS22}. These results motivate the question of whether geometric separators can also provide a subexponential algorithm when group-fairness constraints are present. The main difficulty is that a cluster crossing a separator may receive points of different colors from both sides, and its fairness cannot be checked independently within either side.
\subsection{Our Contributions}
In this paper, we study the group-fair $k$-center problem in fixed-dimensional Euclidean space with continuous centers. Our main contribution is a separator-based structural approach that uses the geometry of Euclidean space to reduce the search space and divide the instance into smaller subproblems while keeping only a small number of active centers.

Based on this approach, we obtain a bicriteria approximation algorithm for group-fair Euclidean $k$-center. Given an exactly fair solution, our algorithm returns a clustering whose radius is within a factor $(1+\epsilon)$ of the optimum exactly fair radius, while allowing a controlled multiplicative relaxation of the fairness constraints. For fixed dimension, number of colors, and approximation parameters, the algorithm runs in polynomial time in the input size with subexponential dependence on $k$.

\begin{theorem}
For any real numbers $0<\varepsilon\leq 1$ and $\lambda>1$, there exists a bicriteria approximation algorithm for the group-fair Euclidean $k$-center problem with continuous centers in fixed dimension $d$. The algorithm achieves a $(1+\epsilon)$-approximation for the clustering radius and a multiplicative $\lambda$-relaxation of the fairness constraints. For fixed number of colors $c$, dimension $d$, and $\lambda$, its running time is $f(k,\varepsilon)\,n^{\mathcal{O}(d)},\text {   where  }f(k,\epsilon)=\left(\frac{k}{\epsilon}\right)^{
\mathcal{O}_{c,d,\lambda}\left(k^{1-1/d}/\epsilon\right)}.$

In particular, for fixed $\epsilon$, the dependence on $k$ is subexponential.
\end{theorem}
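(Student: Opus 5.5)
We plan to prove the theorem by combining three ingredients: a radius guess, a discretization of candidate centers, and a sphere-separator recursion in the style of Agarwal and Procopiuc that carries \emph{fairness profiles} across the separator.

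\textbf{Guessing the radius and discretizing centers.} First, we guess the optimum radius $r^*$. It is determined by at most $d+1$ input points, since it is the radius of a minimum enclosing ball of some cluster, so there are $n^{\mathcal{O}(d)}$ candidates. Alternatively, we can use a $2$-approximation together with a geometric search over a $(1+\epsilon)$-grid. Fix the guess $r$. Each optimal cluster lies in a ball of radius $r$ centered at some point $o$. We snap $o$ to a grid of side $\epsilon r/\sqrt d$ and restrict the grid to the $r$-neighbourhoods of input points. This gives a candidate set $C$ of size $n\cdot(1/\epsilon)^{\mathcal{O}(d)}$, and every optimal cluster lies within distance $(1+\epsilon)r$ of some candidate. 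The optimal assignment of points to clusters is unchanged by snapping, so the snapped solution is still exactly fair with radius $(1+\epsilon)r$.

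\textbf{Separator recursion.} The snapped optimal centers $S\subseteq C$ have pairwise distance bounded below up to merging: any two centers closer than $\epsilon r$ can be merged, at the cost of another $(1+\epsilon)$ factor in the radius. We then apply the sphere separator theorem for balls of radius $(1+\epsilon)r$ around $S$. This yields a sphere $\sigma$ that splits $S$ into parts of size at most $\frac{d+1}{d+2}k$ each, such that only $\mathcal{O}_d(k^{1-1/d}/\epsilon)$ of the balls meet $\sigma$, using the packing bound for centers at distance $\epsilon r$ near $\sigma$. The algorithm enumerates the separator: $\sigma$ is determined by $\mathcal{O}(d)$ candidate points, giving $n^{\mathcal{O}(d)}$ choices. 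It also guesses the set $B$ of boundary centers, at most $|C|^{\mathcal{O}(k^{1-1/d}/\epsilon)}$ choices, and the split of the counts $k_{\mathrm{in}}+k_{\mathrm{out}}$. Points inside $\sigma$ that are not covered by $B$ must be served by centers inside $\sigma$, and symmetrically for points outside. The recursion has depth $\mathcal{O}(\log k)$, and the geometric decay of the exponents yields $f(k,\epsilon)n^{\mathcal{O}(d)}$. To keep the polynomial factor $n^{\mathcal{O}(d)}$ rather than $n^{k^{1-1/d}}$, we first reduce $C$ to a set of size $(k/\epsilon)^{\mathcal{O}(d)}$ that is relevant per subproblem. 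We do this by working inside a bounded region: a connected optimal instance has diameter $\mathcal{O}(kr)$, so each connected component can be localized, and the guesses per level become $(k/\epsilon)^{\mathcal{O}(k^{1-1/d}/\epsilon)}$.

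\textbf{Main obstacle: fairness of boundary clusters.} A cluster whose center lies in $B$ receives points from both sides, so its fairness cannot be checked locally. We handle this with the multiplicative relaxation $\lambda$. For each boundary center and each color, we guess the number of points it takes from each side, rounded to powers of $\lambda^{1/3}$ (together with exact values below a threshold depending on $\lambda$ and $c$). This gives $(\log n)^{\mathcal{O}(c)}$ options per center, which is absorbed into $f(k,\epsilon)\,n^{\mathcal{O}(1)}$ via the standard bound $(\log n)^{m}\le m^{\mathcal{O}(m)}n$. Each side is then solved recursively with \emph{demands}: boundary centers must receive approximately the guessed colored counts from that side. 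Since both sides fulfil their parts of the rounded profile, every boundary cluster has each color count within a $\lambda^{1/3}$ factor of the exact profile. Consequently, its color proportions lie within a factor $\lambda$ of $[\alpha_i,\beta_i]$. Interior clusters are handled deeper in the recursion, and each cluster becomes a boundary cluster at most once, so the rounding errors do not compound. At the leaves, with $\mathcal{O}(1)$ centers, the assignment with given centers and approximate colored demands is a flow or ILP problem with $\mathcal{O}(kc)$ constraint types, which is solvable in polynomial time. I expect the most delicate part to be verifying this non-compounding property and that the demand constraints remain consistent across levels. Combining the bounds gives the radius $(1+\mathcal{O}(\epsilon))r^*$, and rescaling $\epsilon$ gives the claimed running time.
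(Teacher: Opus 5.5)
Your overall architecture is the paper's. It has candidate radii from at most $d+1$ points, grid candidates localized to the $\mathcal{O}(kr)$-diameter components of the $2r$-graph, the bounded-ply sphere separator, guessed crossing centers kept active, multiplicative buckets for their color counts, flows for the assignment, and the absorption $(\log n)^m\le m^{\mathcal{O}(m)}n$. The genuine gap is the step you flagged: the claim that each cluster becomes a boundary cluster at most once, so the rounding errors do not compound. That claim does not follow.

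A center $b$ that becomes active at some separator stays active in both child calls and in all their descendants. Later separators balance only the \emph{newly} selected centers, so they may cut $D_b$ again at every level. Hence $b$'s demand from one side must be split again, at each of the $\mathcal{O}_d(\log k)$ later levels, into inside, outside and boundary parts. These sub-demands must again be bucketed, since exact sub-demands would bring back $n^{\Theta(ck^{1-1/d}/\epsilon)}$ states.

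The children's counts are each known only up to their own bucket. Their realized sum is within a factor $\lambda^{1/3}$ of the parent region's reference count, but it need not lie in the parent's bucket. If you insist that it does, the branch following the optimal solution can be rejected, and completeness fails. If you allow slack, the guarantee weakens by a factor $\lambda^{1/3}$ per level. A boundary cluster's counts are then controlled only up to $(\lambda^{1/3})^{\Theta(\log k)}$, which is not bounded by $\lambda$. Adding approximate counts is harmless; it is the repeated re-rounding of partial sums that compounds.

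The repair is what the paper does: make the bucket ratio depend on the depth, $\theta=\lambda^{1/(4(D_d+2))}$ with $D_d=\lceil\log_{1/\beta_d}k\rceil+1$. Then prove by induction on the remaining recursion height $h$ that a call returns a witness whose contribution to every inherited center--color pair is within $\theta^{h+2}$ of the reference contribution, with zeros preserved. The boundary flow costs one factor, keeping a single witness per bucket vector costs one factor, and adding costs none. The total count distortion is then at most $\lambda^{1/4}$, so proportions move by at most $\lambda^{1/2}\le\lambda$ by Property 2. The price is $L=\mathcal{O}(1+\log(k+1)\log n/\log\lambda)$ buckets per coordinate, and the extra $\log k$ is absorbed into $f(k,\epsilon)$.

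A smaller issue is the separator enumeration. Your $n^{\mathcal{O}(d)}$ choices per node multiply to $n^{\mathcal{O}(d\log k)}$ along a root-to-leaf path, which is not of the form $f(k,\epsilon)\,n^{\mathcal{O}(d)}$. You should enumerate separator types only with respect to the $(k/\epsilon)^{\mathcal{O}(d)}$ candidate balls; the paper gets $M^{\mathcal{O}(d)}$ representative spheres from the sign conditions of $g_f^{\pm}$ and $h_f$. You should also note that preserving the inside/outside/crossing classification of candidate balls is enough to justify your rule that uncovered inside points are served by inside centers.
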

 \noindent\textbf{Organization of the Paper.}  In Section \ref{pre}, we introduce the notation, preliminary concepts, and the fairness model used throughout the paper. In Section \ref{over}, we give an overview of the main technical ideas behind our separator-based approach. In Section \ref{algo} , we present the algorithm and describe its main recursive steps. In Section \ref{correct}, we prove the correctness of the algorithm and establish our main bicriteria approximation result. In Section \ref{sec:runtime}, we analyze the running time and show the subexponential dependence on $k$. Finally, in Section \ref{sec:conclusion}, we conclude the paper and discuss possible directions for further study.

\section{Preliminaries}\label{pre}

For any integer $m\ge 1$, let $[m]=\{1,\ldots,m\}$. Throughout the paper, we work in the Euclidean space $\mathbb{R}^d$, where $d\ge 2$ is fixed, and let $P\subseteq\mathbb{R}^d$ denote the set of input points. For any two points $u,v\in\mathbb{R}^d$, we write $d(u,v)=\|u-v\|_2$ for their Euclidean distance. For a point $v\in\mathbb{R}^d$ and a radius $r\ge 0$, let $B(v,r)=\{x\in\mathbb{R}^d : d(x,v)\le r\}$ denote the Euclidean ball of radius $r$ centered at $v$. For a point $v\in\mathbb{R}^d$ and a set of centers $S\subseteq\mathbb{R}^d$, we use $d(v, S)=\min_{s\in S} d(v,s)$ to denote the distance from $v$ to its nearest center in $S$. For a finite set $X\subseteq\mathbb{R}^d$, let
$\operatorname{diam}(X)=\max_{x,y\in X} d(x,y).$  Unlike the discrete $k$-center setting, the centers in our problem are continuous and may be chosen from any point of $\mathbb{R}^d$.

We use standard graph notation. For an undirected graph $G=(V, E)$, $V(G)$ and $E(G)$ denote its vertex and edge sets, respectively. A connected component is a maximal connected subgraph of $G$. The notation $\mathcal{O}_d(\cdot)$ means that the hidden constant may depend on $d$. Similarly, $\mathcal{O}_{c,d}(\cdot)$ allows dependence on $c$ and $d$.

We first define the Euclidean $k$-center problem. Let $P\subseteq \mathbb{R}^d$ be a set of $n$ points and let $k$ be a positive integer. In our setting, the centers are continuous and can be chosen from anywhere in $\mathbb{R}^d$.

\begin{definition}[Euclidean $k$-Center]
An instance of the Euclidean $k$-center problem is given by $(P,k)$. The goal is to find a set $C\subseteq\mathbb{R}^d$, with $|C|\le k$, such that $\max_{p\in P} d(p,C)$ is minimized, where $d(p,C)=\min_{a\in C}\|p-a\|_2.$
\end{definition}

We now introduce the group-fairness condition used in this paper. Suppose the input points are partitioned into $c$ disjoint color classes $\mathcal{P}=\{P_1,\ldots,P_c\},$
such that $P_1\cup P_2\cup\cdots\cup P_c=P$ and $P_i \cap P_j =\emptyset \text{ for any }i \neq j.$
Every point in $P_i$ is said to have color $i$. For every color $i\in[c]$, we are given two values $\alpha_i$ and $\beta_i$, where $0\le \alpha_i\le \beta_i\le 1.$
These values specify the lower and upper proportion of color $i$ allowed in every cluster.

\begin{definition}[Group-Fair Euclidean $k$-Center]
An instance of the group-fair Euclidean $k$-center problem, denoted by GF-$k$-CEN, is given by $(P,\mathcal{P},\alpha, \beta,k),$
where $\alpha=(\alpha_1,\ldots,\alpha_c), \beta=(\beta_1,\ldots,\beta_c).$ The goal is to find a set $C\subseteq\mathbb{R}^d$ of at most $k$ centers and an assignment $\phi:P\rightarrow C$ that minimizes $\max_{p\in P}\|p-\phi(p)\|_2,$ while satisfying, for every nonempty cluster with center $a\in C$ and every color $i\in[c]$, $$\alpha_i \le \frac{|\{p\in P_i:\phi(p)=a\}|} {|\{p\in P:\phi(p)=a\}|} \le \beta_i.$$

\noindent A pair $(C,\phi)$ satisfying these conditions is called an exactly $(\alpha,\beta)$-fair feasible solution.
\end{definition}

For our approximation algorithm, we also use the following relaxed fairness notion.

\begin{definition}[$\lambda$-Relaxed Group Fairness]
For $\lambda>1$, a clustering is called $\lambda$-relaxed
$(\alpha,\beta)$-fair if, for every nonempty cluster and every color $i\in[c]$,
$$\frac{\alpha_i}{\lambda} \le \frac{|\{p\in P_i:\phi(p)=a\}|} {|\{p\in P:\phi(p)=a\}|} \le\min\{1,\lambda\beta_i\}.$$
\end{definition}

Let $R$ denote the minimum radius of an exactly
$(\alpha,\beta)$-fair solution using at most $k$ continuous centers. Our algorithm returns a clustering of radius at most $(1+\epsilon)R$ while satisfying the above $\lambda$-relaxed fairness condition.

\medskip
\noindent\textbf{Parameterized Complexity.}
A parameterized problem has an input of size $n$ together with a parameter $k$. An algorithm is called fixed-parameter tractable, or FPT, with respect to $k$ if its running time can be written as $f(k)n^{\mathcal{O}(1)},$ where $f$ is an arbitrary computable function depending only on the parameter. More generally, when other quantities such as $1/\varepsilon$ are treated as parameters, the running time may be written as $f(k,1/\epsilon)n^{\mathcal{O}(1)}.$
In contrast, an algorithm with running time $n^{f(k)}$
is said to be in XP with respect to $k$.

We call the dependence on $k$ subexponential if it is of the form $2^{o(k)}.$ For example, for every fixed $d\ge 2$, $2^{\mathcal{O}(k^{1-1/d}\log k)}$ is subexponential in $k$. An FPT approximation scheme gives, for every $\varepsilon>0$, a $(1+\varepsilon)$-approximation with an FPT running time in the relevant parameters. The geometric running-time bounds are stated in the standard real-arithmetic model for fixed-dimensional Euclidean algorithms.

\medskip
{\noindent\textbf{Basic Properties.}} The following three properties will be useful throughout the paper. They capture the basic facts about fairness, count distortion, and minimum enclosing balls that are needed in our algorithm.

\medskip
\noindent\textbf{Property 1.} \label{unifair} If non empty clusters $X_1,X_2,\dots,X_t$ are each $(\alpha,\beta)$-fair, then their disjoint union ${X=\bigcup_{j=1}^{t} X_j}$ is also $(\alpha,\beta)$-fair.

\medskip
\noindent\textbf{Property 2.} \label{sqmiu} Let $x_1,x_2,\dots,x_c \geq 0$ and $x_1',x_2',\dots,x_c' \geq 0$ with totals $N=\sum_i x_i>0$ and $N'=\sum_i x_i'>0$. Suppose that for some $\mu \geq 1$, $\frac{x_i}{\mu} \leq x_i' \leq \mu x_i \text{ holds for every }i\in[c].$ Then, $$ \frac{x_i}{\mu^2N} \leq \frac{x_i'}{N'} \leq \mu^2 \frac{x_i}{N}$$

\noindent\textbf{Property 3.} \label{minpoints} For every finite nonempty set of points $X \subseteq \mathbb{R}^d$, the radius of minimum enclosing ball of $X$ can be always determined by at most $d+1$ many points \cite{DBLP:conf/birthday/Welzl91}.

For Property~\ref{unifair}, each color proportion in the union is a weighted average of the proportions in the individual clusters. For Property~\ref{sqmiu}, summing the count inequalities gives $N/\mu\leq N'\leq\mu N$, and dividing the bounds for $x_i'$ by these total-size bounds gives the stated inequalities.

\section{Technical Overview} \label{over}

Our goal is to form at most $k$ clusters whose points lie close to their centers, while keeping the proportion of each color within its prescribed range. We allow a small increase in radius and a controlled relaxation of fairness. The main idea is to reduce the possible center locations to a finite set, divide the problem into smaller regions, and keep enough information to combine their assignments.
\medskip

\noindent\textbf{Preprocessing Phase.} 
The algorithm first guesses the radius $R$. In $d$-dimensional space, the minimum enclosing ball of any cluster is determined by at most $d+1$ points. Hence, after moving every center to the minimum enclosing ball of its cluster, the fairness does not change, and the radius does not increase. Therefore, it is enough to consider the radii obtained from all subsets of at most $d+1$ input points.
\begin{figure}[htbp]
\centering
\resizebox{0.7\textwidth}{!}{%
\begin{tikzpicture}[fc base]

\path[use as bounding box]
  (0,-0.10) rectangle (10.35,4.50);

\node[fc heading] at (1.575,4.15)
  {\textcolor{fcBlue}{1.} Support points};
\node[fc heading] at (5.175,4.15)
  {\textcolor{fcBlue}{2.} Small subsets};
\node[fc heading] at (8.775,4.15)
  {\textcolor{fcBlue}{3.} Candidate radii};

\foreach \x in {0,3.60,7.20}
  \draw[fcRule,line width=0.45pt]
    (\x,3.90)--++(3.15,0);

\node at (1.575,3.52)
  {A cluster attaining $R^\star$};

\coordinate (C) at (1.575,2.23);

\draw[fcBlue,fill=fcBlue!4,line width=0.75pt]
  (C) circle[radius=0.84cm];

\foreach \x/\y in {
  1.22/2.35,1.65/1.90,1.80/2.10,
  1.35/1.74,2.04/2.45}
  \fill[fcInk] (\x,\y) circle[radius=1.25pt];

\foreach \angle in {30,150,270}
  \path (C) ++(\angle:0.84cm)
    node[
      circle,draw=fcBlue,fill=white,
      line width=0.8pt,
      minimum size=4.2pt,inner sep=0pt
    ] {};

\fill[fcBlue] (C) circle[radius=1.8pt];
\node[below left=2pt] at (C) {$c$};

\draw[fcMuted,densely dashed,line width=0.45pt]
  (C)--++(0,0.84)
  node[midway,right=2pt,text=fcBlue] {$R^\star$};

\node[align=center,text width=3.10cm] at (1.575,0.72)
  {At most $d+1$ points\\ determine the radius.};

\draw[fc arrow] (3.20,2.23)--(3.53,2.23);
\draw[fc arrow] (6.80,2.23)--(7.13,2.23);

\node at (5.175,3.52)
  {For every $S\subseteq P$ with};

\node at (5.175,3.10)
  {$1\le |S|\le d+1$,};

\node[align=center,text width=3.10cm] at (5.175,2.35)
  {find its minimum\\ enclosing radius $r(S)$.};

\node[fc note] at (5.175,1.52)
  {Number of subsets};

\node at (5.175,0.82)
  {$\displaystyle\sum_{i=1}^{d+1}\binom{n}{i}$};

\node[fc note] at (5.175,0.13)
  {Polynomial for fixed $d$.};

\node at (8.775,3.32) {$\begin{aligned}
  \mathcal{R}&=\{r(S):S\subseteq P,\\[-1pt]
             &\quad 1\le |S|\le d+1\}.
\end{aligned}$};

\node[fc note] at (8.775,2.60)
  {Sort the distinct radii};

\draw[fcMuted,line width=0.55pt]
  (7.32,1.78)--(10.23,1.78);

\foreach \x/\lab in {
  7.42/1,8.00/2,8.97/j,10.10/t}{
  \draw[fcMuted,line width=0.5pt]
    (\x,1.70)--(\x,1.86);
  \node[fc note] at (\x,1.48) {$R_{\lab}$};
}

\node[fc note] at (8.48,1.78) {$\cdots$};
\node[fc note] at (9.56,1.78) {$\cdots$};

\draw[fcBlue,line width=1pt]
  (8.97,1.65)--(8.97,2.01);

\node[text=fcBlue] at (8.97,2.22)
  {$R^\star=R_j$};

\node[align=center,text width=3.10cm] at (8.775,0.72)
  {Test the candidates\\ in increasing order.};

\node[text=fcBlue] at (8.775,0.13)
  {$R^\star\in\mathcal{R}$};

\end{tikzpicture}%
}
\caption{In $\mathbb{R}^{d}$, a minimum enclosing ball is determined
by at most $d+1$ points. Enumerating these small subsets therefore
gives a finite candidate set containing the optimal radius $R^\star$.}
\label{fig:finite-radius-guess}
\end{figure}

For a fixed radius $R$, we build a graph on $P$ by connecting two points whose distance is at most $2R$. Any two points assigned to the same radius-$R$ center must belong to the same connected component. Hence, different connected components can be considered separately. Moreover, if a feasible solution using at most $k$ centers exists, there are at most $k$ such components, and each component has diameter $\mathcal{O}(kR)$. Therefore, each component lies in a bounded region. We enlarge this region slightly and place a fine grid inside it. The grid points are then used as the candidate centers in the remaining algorithm.

\begin{figure}[htbp]
\centering
\resizebox{0.7\textwidth}{!}{%
\begin{tikzpicture}[fc base]

\def\fcComponentA{%
  \draw[fcBlue,line width=0.65pt]
    (0,0.2)--(0.55,0.65)--(0.82,-0.1)--cycle
    (0.55,0.65)--(1.4,0.45)--(2.1,0.5)--(1.7,-0.2)
    --(0.82,-0.1)--(1.4,0.45)--(1.7,-0.2);
  \foreach \x/\y in {
    0/0.2,0.55/0.65,0.82/-0.1,
    1.4/0.45,1.7/-0.2,2.1/0.5}
    \fill[fcInk] (\x,\y) circle[radius=1.6pt];
}

\def\fcComponentB{%
  \draw[fcBlue,line width=0.65pt]
    (0,0)--(0.38,0.32)--(0.58,-0.22)--(0.08,-0.45)--cycle
    (0,0)--(0.58,-0.22)
    (0.38,0.32)--(0.08,-0.45);
  \foreach \x/\y in {
    0/0,0.38/0.32,0.58/-0.22,0.08/-0.45}
    \fill[fcInk] (\x,\y) circle[radius=1.6pt];
}

\path[use as bounding box]
  (0,-0.35) rectangle (10.35,4.50);

\node[fc heading] at (1.575,4.15)
  {\textcolor{fcBlue}{1.} Build $G_R$};
\node[fc heading] at (5.175,4.15)
  {\textcolor{fcBlue}{2.} Enlarge boxes};
\node[fc heading] at (8.775,4.15)
  {\textcolor{fcBlue}{3.} Grid candidates};

\foreach \x in {0,3.60,7.20}
  \draw[fcRule,line width=0.45pt]
    (\x,3.90)--++(3.15,0);

\draw[fc arrow] (3.20,2.20)--(3.53,2.20);
\draw[fc arrow] (6.80,2.20)--(7.13,2.20);

\begin{scope}[shift={(0.50,2.85)},scale=0.8]
  \fcComponentA
\end{scope}

\begin{scope}[shift={(2.12,1.25)},scale=0.8]
  \fcComponentB
\end{scope}

\node[fc note] at (1.30,3.62) {$\le 2R$};

\begin{scope}[shift={(4.25,2.85)},scale=0.8]
  \draw[fcBlue,fill=fcBlue!3,line width=0.65pt]
    (-0.45,-0.65) rectangle (2.55,1.10);
  \draw[fcMuted,densely dashed,line width=0.45pt]
    (0,-0.20) rectangle (2.10,0.65);
  \fcComponentA
\end{scope}

\begin{scope}[shift={(5.87,1.25)},scale=0.8]
  \draw[fcBlue,fill=fcBlue!3,line width=0.65pt]
    (-0.45,-0.90) rectangle (1.03,0.77);
  \draw[fcMuted,densely dashed,line width=0.45pt]
    (0,-0.45) rectangle (0.58,0.32);
  \fcComponentB
\end{scope}

\draw[fc dimension] (3.89,3.55)--(4.25,3.55)
  node[midway,below=1pt,fc note] {$R$};

\draw[fc dimension] (4.25,2.24)--(5.93,2.24)
  node[midway,below=1pt,fc note]
    {$\operatorname{diam}=O(kR)$};

\foreach \xa/\ya/\xb/\yb in {
  7.44/2.16/10.08/3.84,
  8.88/0.48/10.32/1.92}{
  \draw[fcBlue,fill=fcBlue!3,line width=0.65pt]
    (\xa,\ya) rectangle (\xb,\yb);
  \begin{scope}
    \clip (\xa,\ya) rectangle (\xb,\yb);
    \draw[fcRule,line width=0.3pt,step=0.24]
      (7.20,0) grid (10.56,4.08);
    \foreach \ix in {1,...,13}
      \foreach \iy in {2,...,16}
        \fill[fcBlue!65]
          ({7.20+0.24*\ix},{0.24*\iy})
          circle[radius=0.7pt];
  \end{scope}
}

\begin{scope}[shift={(7.85,2.85)},scale=0.8]
  \fcComponentA
\end{scope}

\begin{scope}[shift={(9.37,1.25)},scale=0.8]
  \fcComponentB
\end{scope}

\node[align=center] at (1.575,0.05)
  {Join $p,q$ when\\ $\|p-q\|\le 2R$.};

\node[align=center] at (5.175,0.05)
  {At most $k$ components, enlarge \\ each bounded box by $R$.};

\node[align=center] at (8.775,0.05)
  { Form the set $F$};

\end{tikzpicture}%
}
\caption{The graph $G_R$ separates the input into connected components.
Each component lies in a bounded region, which is enlarged and covered
by a fine grid. The grid points are used as candidate centers.}
\label{fig:component-grid}
\end{figure}

\noindent\textbf {Gridding Phase.} Moving a center to a nearby grid point increases the radius only slightly; a sufficiently fine grid keeps it within \((1+\varepsilon)R\). If several centers move to the same location, we merge their clusters (by Property~\ref{unifair}, this preserves fairness and uses no more centers). This uses another property: merging fair clusters preserves fairness. The proportion of any color after merging is a weighted average of its proportions in the original clusters, so it remains within the prescribed range. 
\begin{figure}[htbp]
\centering
\resizebox{0.5\textwidth}{!}{%
\begin{tikzpicture}[fc base]

\path[use as bounding box]
  (0,-0.15) rectangle (7.25,4.20);

\node[fc heading] at (1.725,3.95)
  {\textcolor{fcBlue}{1.} Snap the center};
\node[fc heading] at (5.525,3.95)
  {\textcolor{fcBlue}{2.} Enlarge the ball};

\draw[fcRule,line width=0.45pt]
  (0,3.69)--(3.45,3.69);
\draw[fcRule,line width=0.45pt]
  (3.80,3.69)--(7.25,3.69);

\foreach \x in {0.20,0.70,...,3.20}
  \draw[fcRule,line width=0.35pt]
    (\x,1.30)--(\x,3.30);

\foreach \y in {1.30,1.80,...,3.30}
  \draw[fcRule,line width=0.35pt]
    (0.20,\y)--(3.20,\y);

\foreach \x in {0.20,0.70,...,3.20}
  \foreach \y in {1.30,1.80,...,3.30}
    \fill[fcMuted!60] (\x,\y) circle[radius=0.9pt];

\coordinate (c) at (1.50,2.11);
\coordinate (f) at (1.70,2.30);

\draw[fc arrow,shorten <=2pt,shorten >=2pt]
  (c)--(f);

\fill[fcMuted] (c) circle[radius=1.9pt];
\fill[fcBlue] (f) circle[radius=2.1pt];

\node[below left=2pt] at (c) {$c$};
\node[above right=2pt,text=fcBlue] at (f) {$f$};

\node at (1.725,0.99)
  {$\|c-f\|\le\varepsilon R$};

\draw[fc arrow] (3.48,2.28)--(3.77,2.28);

\coordinate (C) at (5.30,2.20);
\coordinate (F) at (5.50,2.39);

\draw[fcBlue,fill=fcBlue!3,line width=0.75pt]
  (F) circle[radius=1.10cm];

\draw[fcMuted,densely dashed,line width=0.65pt]
  (C) circle[radius=0.80cm];

\foreach \x/\y in {
  4.85/2.45,5.12/2.74,5.55/2.68,
  5.79/2.22,5.48/1.68,4.96/1.86}
  \fill[fcInk] (\x,\y) circle[radius=1.3pt];

\draw[fc arrow,shorten <=2pt,shorten >=2pt]
  (C)--(F);

\fill[fcMuted] (C) circle[radius=1.9pt];
\fill[fcBlue] (F) circle[radius=2.1pt];

\node[below left=2pt] at (C) {$c$};
\node[above right=2pt,text=fcBlue] at (F) {$f$};

\node[fc note] at (3.625,0.58)
  {Replace $c$ by nearest grid candidate $f$\qquad
   Open ball: $B(f,(1+\varepsilon)R)$};

\node at (3.625,0.10)
  {$\|p-f\|\le\|p-c\|+\|c-f\|
    \le(1+\varepsilon)R$};

\end{tikzpicture}%
}
\caption{Effect of moving a center to a nearby grid point.
The center $c$ moves to its nearest grid point $f$ by a distance
of at most $\varepsilon R$. Therefore, every point previously
covered within radius $R$ remains covered within radius
$(1+\varepsilon)R$.}
\label{fig:grid-displacement}
\end{figure}

We next divide the problem using a separator sphere.
Each candidate center $f\in F$ defines a ball
$D_f=B(f,r)$, where
$r=R+\varepsilon R/20<(1+\varepsilon)R$.
This ball contains precisely those points that can
be assigned to $f$ without exceeding radius $r$.
 Since the candidate centers lie on a grid, only a bounded number of these balls can cover the same point. The separator theorem gives a sphere that places a fixed fraction of the selected centers on each side, while only a small number of their balls cross the sphere. We guess these crossing balls and solve the inside and outside parts recursively. 
\medskip

\noindent\textbf{Sphere-Separator Phase.} We next use a geometric separator to divide the problem into smaller parts. For every candidate center \(f\in F\), consider the ball \(D_f=B(f,r)\), which contains all points at distance at most \(r\) from \(f\). A sphere is the boundary of a ball; in two dimensions, it is simply a circle. We use such a sphere as a separator to divide the candidate balls into three groups: those lying completely inside, those lying completely outside, and those crossing the separator.

The main idea behind our subexponential algorithm comes from the sphere-separator theorem. Since the candidate centers are placed on a grid, the family of candidate balls has bounded ply \(p=\mathcal{O}_d(\varepsilon^{-d})\). Now consider any \(\ell\) selected candidate balls. The theorem guarantees the existence of a separator sphere \(\gamma\) such that at most \(\beta_d\ell\) balls lie completely on either side, where \(\beta_d=(d+1)/(d+2)<1\), and only $\mathcal{O}_d\left({\ell^{1-1/d}}/{\varepsilon}\right)$ balls cross the separator. In particular, for \(d=2\), only \(\mathcal{O}(\sqrt{\ell}/\varepsilon)\) balls cross it. Thus, after identifying these crossing balls, we can divide the remaining selected centers into two smaller parts and solve them recursively.

This idea is similar to separator-based dynamic programming on graphs of small treewidth. In such algorithms, a small boundary connects different parts of the graph, and the information stored at this boundary allows the parts to be solved independently. We follow a similar idea, where the crossing balls form the boundary between the two geometric regions. However, our problem is more involved because a center whose ball crosses the separator may receive points of different colors from both sides. Therefore, we must also keep track of the number of points of each color assigned to these centers.

\medskip

\noindent\textbf{Why a More Careful Recursion Is Needed.} Although the separator gives a small boundary, directly applying a standard dynamic programming approach is not sufficient for our required running time. At each step, there may be polynomially many possible separator choices, and the recursion has \(\mathcal{O}_d(\log k)\) levels. More importantly, storing every possible color count for the crossing centers would create \(n^{\mathcal{O}(c k^{1-1/d}/\varepsilon)}\) states, leading to an XP-type running time rather than an FPT algorithm. To overcome this difficulty, we use a specially designed recursive algorithm. Instead of considering infinitely many separator spheres, we enumerate a finite family of representative spheres that preserve the inside, outside, and crossing classification of every candidate ball. For each representative separator, we guess the selected crossing centers and recursively solve the two sides. We keep these centers active until their complete clusters are known and store their color counts using multiplicative buckets rather than exact values. This allows us to control the number of states and obtain the required subexponential dependence on \(k\). We explain these steps below.
\medskip

\noindent\textbf{Recursion Phase. } With $\ell$ centers remaining, at most $ \mathcal{O}_d(\ell^{1-1/d}/\epsilon)$ selected balls need to cross. We enumerate the relevant separators and guess the selected crossing balls before solving the two sides recursively. A crossing cluster may receive differently colored points from both sides, whose union is fair. We therefore keep it active, record the color counts contributed by each side, and check fairness only after all contributions have been collected. A direct use of this idea gives a $k^{\mathcal{O}(k)}$-type algorithm; obtaining a subexponential running time requires the more careful recursion described below. 
\medskip

\noindent \textbf{Bucketing Subroutine.} Storing every possible color count would make the number of states too large and lead to an XP-type algorithm. We therefore use bucketing: zero is stored separately, while nearby positive counts are grouped into the same geometric interval. Thus, the algorithm remembers only an approximate value of each count instead of its exact value, which greatly reduces the number of choices. For one center-color pair, we use $L=\mathcal{O}(1+\log(k+1)\log(n+1)/\log\lambda)$ buckets. Along a recursive path there are at most $H_d=\mathcal{O}_d(k^{1-1/d}/\varepsilon)$ active centers, so a table has at most $L^{cH_d}$ count states. The factor involving $\log n$ can be absorbed into $n^{\mathcal{O}(1)}$ together with a parameter-dependent factor, as shown in Section~\ref{sec:runtime}.

This approximation is justified by Property~\ref{sqmiu}. If every color count changes by at most a factor $\mu$, then the total cluster size also changes by at most $\mu$. Therefore, every color proportion changes by at most a factor $\mu^2$. Hence, using sufficiently fine buckets introduces only a controlled error in fairness. This bucketing step reduces the state space and helps obtain a subexponential FPT algorithm instead of an XP algorithm. Adding approximate counts does not increase the error factor. Error is added only when the counts are rounded again. Since the recursion has $\mathcal{O}(\log k)$ levels, we choose narrow buckets so that the total error remains within the allowed fairness relaxation.
\medskip

\noindent \textbf{Flowing Subroutine.}  
Finally, we use a lower and upper bounded feasible flow for each color to construct an assignment consistent with the chosen bucket intervals. Every point is assigned to an eligible center, and the number of points of each color assigned to a center remains within its stored interval. Since all capacities are integral, a feasible flow gives an integral assignment. Once a cluster is complete, we check its actual color counts against the relaxed fairness bounds. The small number of active centers and the compact representation of their color counts give the required subexponential dependence on $k$, for fixed dimension, number of colors, and approximation parameters.

\begin{figure}[htbp]
\centering
\resizebox{0.6\textwidth}{!}{%
\begin{tikzpicture}[
  fc base,
  client/.style={
    circle,draw=fcBlue,fill=fcBlue!4,
    line width=0.65pt,minimum size=0.46cm,
    inner sep=0pt
  },
  center/.style={
    rectangle,draw=fcBlue,fill=fcBlue!4,
    line width=0.65pt,minimum width=0.58cm,
    minimum height=0.42cm,inner sep=0pt
  },
  terminal/.style={
    circle,draw=fcInk,fill=fcInk,text=white,
    minimum size=0.48cm,inner sep=0pt
  },
  capacity/.style={
    fc note,fill=white,inner sep=1.5pt
  }
]

\path[use as bounding box]
  (0,-0.10) rectangle (8.65,4.20);

\node[fc heading] at (4.325,4.02)
  {Lower-bounded assignment for color $i$};

\draw[fcRule,line width=0.45pt]
  (0,3.78)--(8.65,3.78);

\node[fc note] at (0.35,3.55) {Source};
\node[fc note] at (2.00,3.55) {Color-$i$ clients};
\node[fc note] at (5.05,3.55) {Selected centers};
\node[fc note] at (8.30,3.55) {Sink};

\node[terminal] (s) at (0.35,2.20) {$s$};
\node[terminal] (t) at (8.30,2.20) {$t$};

\foreach \j/\y in {1/3.10,2/2.50,3/1.90,4/1.30}
  \node[client] (p\j) at (2.00,\y) {$p_{\j}$};

\foreach \j/\y in {1/2.95,2/2.20,3/1.45}
  \node[center] (a\j) at (5.05,\y) {$a_{\j}$};

\foreach \j in {1,...,4}
  \draw[fc arrow] (s)--(p\j);

\node[capacity] at (0.99,3.18) {$[1,1]$};

\foreach \p/\a in {
  1/1,1/2,2/1,2/3,3/2,3/3,4/2,4/3}
  \draw[fc arrow] (p\p)--(a\a);

\node[capacity] at (3.52,3.18) {$[0,1]$};

\node[fc note] at (3.85,0.95)
  {Edge exists only if $\|p-a\|\le r$};

\foreach \j in {1,...,3}
  \draw[fc arrow,draw=fcBlue] (a\j)--(t);

\node[capacity] at (6.72,3.18)
  {$[\ell_{a,i},u_{a,i}]$};

\draw[fc arrow,densely dashed]
  (t.south)--(8.30,0.55)
  --(0.35,0.55)--(s.south);

\node[capacity] at (4.325,0.55) {};

\node[fc note] at (4.325,0.08)
  {An integral feasible flow gives an actual assignment. Repeat the same network independently for every color.};

\end{tikzpicture}%
}
\caption{Lower-bounded feasible flow for one color $i$.
Every client sends exactly one unit of flow to an eligible center.
The bounds on the center-to-sink edges ensure that the number of
color-$i$ clients assigned to each center lies in its selected
bucket interval. The same construction is applied independently
to every color.}
\label{fig:lower-bounded-flow}
\end{figure}

\noindent \textbf{Subexponential Time Bound.}  The main saving in the running time comes from the separator. A direct approach would guess all $k$ centers from the grid and would require roughly $(k/\epsilon)^{\mathcal{O}(k)}$ choices. In our recursion, however, only the centers whose balls cross the separator need to be guessed and kept active. Their total number along a recursive branch is
$\mathcal{O}_d\left(\frac{k^{1-1/d}}{\varepsilon}\right).$
The recursion has $\mathcal{O}_d(\log k)$ levels, while the separator choices, rounded color counts, and flow computations contribute only additional parameter-dependent or polynomial factors. Therefore, for fixed dimension and number of colors, the overall running time is
$n^{\mathcal{O}(d)}\left(\frac{k}{\epsilon}\right)^{ \mathcal{O}_{c,d,\lambda}(k^{1-1/d}/\epsilon)}$.
Since $k^{1-1/d}=o(k)$, the dependence on $k$ is subexponential rather than exponential. A detailed analysis of the running time is provided in Section~\ref{sec:runtime}.

\section{Algorithm} \label{algo}
We first describe the radius guessing and grid preprocessing steps that produce a finite set of candidate centers.
We then present the separator recursion, color-count bucketing, and flow-based assignment subroutine, followed by the correctness proof and running-time analysis.

\noindent \textbf{Guess 1:}\label{step:radius-enumeration} In starting of our algorithm, we want to guess the radius of the optimal clustering $C$. Suppose $R^*$ is the optimal radius. In an optimal clustering $C$, consider the cluster whose minimum enclosing ball has radius $R^*$. Based on property \ref{minpoints}, we can say, $R^*$ is the radius of the minimum enclosing ball determined by some subset of at most $d+1$ input points. Hence we enumerate all possible on all such subset and compute their minimum enclosing ball radius. This gives at most $\mathcal{O}(n^{d+1})$ many candidate radius. For a fixed dimension, this can be done in poly(n) time. So, our guess $R$ tries all possible values in increasing order.

The value $R=0$ is handled separately: clients at each distinct location must form a fair cluster, after merging all clusters at that location. We return these clusters if there are at most $k$ locations and all of them satisfy the exact fairness bounds; otherwise we continue with the positive candidate radii. This avoids using a grid of spacing zero.
 \medskip 
 
\noindent\textbf{Construction of the Graph $G_R$:} After fixing a candidate guess $R>0$, we construct a graph $G_R$ from the given set of points $P$ where each vertex of $G_R$ corresponds a point of $P$ and we put an edge between two vertex $p,q$ if $d(p,q)=\|p-q\|_2 \leq 2R$ in $\mathbb{R}^d$.
Hence come our two observations-

\begin{observation}
    Every cluster of every radius $R$-solution lies in a single connected component of $G_R$.
\end{observation}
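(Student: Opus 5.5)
The statement follows directly from the triangle inequality, so the plan is a single application of it. Fix any solution of radius $R$, meaning a set of centers $C$ and an assignment $\phi$ with $\|p-\phi(p)\|_2\le R$ for every $p\in P$. Consider a cluster $X_a=\{p\in P:\phi(p)=a\}$ with center $a\in C$. It suffices to show that all points of $X_a$ are joined by paths in $G_R$, and we will in fact show they are pairwise adjacent.

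\textbf{Step 1 (pairwise adjacency).} Take any two points $p,q\in X_a$. Both are within distance $R$ of the common center $a$, so the triangle inequality gives $d(p,q)\le d(p,a)+d(a,q)\le 2R$. By the construction of $G_R$, the pair $\{p,q\}$ is an edge. Note that the center $a$ may be an arbitrary point of $\mathbb{R}^d$ rather than an input point. This does not matter, because only the distances from the points of $X_a$ to $a$ are used.

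\textbf{Step 2 (conclusion).} Step 1 shows that $X_a$ induces a clique in $G_R$, so it is connected. Any connected vertex set lies inside a single connected component, since a component is a maximal connected subgraph. Empty clusters satisfy the claim vacuously.

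There is no genuine obstacle here. The only point needing care is that the threshold $2R$ in the definition of $G_R$ is exactly the diameter bound of a ball of radius $R$. Fairness plays no role, so the statement holds for every radius-$R$ assignment, fair or not.
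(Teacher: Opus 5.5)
Your proof is correct and matches the paper's argument: both apply the triangle inequality through the common center $a$ to get $\|p-q\|_2\le 2R$, so every cluster induces a clique in $G_R$. You also spell out the final step that a clique is connected and hence lies in a single component, which the paper leaves implicit.
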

\begin{proof}
   Consider any two points $p,q$ from a same cluster $X$ whose center is say $a$. Then, according to the construction of our graph $G_R$, $\mathrm{dist}(p,q)=\|p-q\|_2 \leq \|p-a\|_2+\|a-q\|_2 \leq 2R$. Which implies that every cluster $X$ induces a clique in $G_R$. 
\end{proof}

\begin{observation}
    If $G_R$ has more than $k$ connected components, then no radius $R$ solution clustering of at most $k$ centers exists.
\end{observation}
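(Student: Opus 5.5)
The plan is a counting argument on the number of nonempty clusters, using the previous observation. Fix any solution of radius $R$ that uses a center set $C$ with $|C|\le k$ and an assignment $\phi$. Discard the centers that receive no point; this does not change the radius, so we may assume that every center in $C$ has a nonempty cluster $X_a=\phi^{-1}(a)$.

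By the previous observation, each cluster $X_a$ induces a clique in $G_R$. Since $X_a$ is nonempty, it therefore lies entirely inside exactly one connected component of $G_R$. This gives a map $\psi: C \to \{\text{components of } G_R\}$ that sends each center $a$ to the component containing $X_a$.

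Next I would argue that $\psi$ is surjective. Take any component $K$ of $G_R$. It is nonempty, so pick a point $p\in K$. Every input point is assigned to some center, so $p\in X_{\phi(p)}$, and hence $\psi(\phi(p))=K$. Thus every component is hit by at least one center.

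Surjectivity gives that the number of components of $G_R$ is at most $|C|\le k$. Taking the contrapositive: if $G_R$ has more than $k$ components, no solution of radius $R$ with at most $k$ centers exists. Fairness plays no role here, so the conclusion holds even for ordinary $k$-center, and in particular for exactly or relaxed fair solutions.

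There is no real obstacle. The only point to state carefully is that a nonempty cluster cannot be split across two components, and this is exactly what the clique property from the previous observation guarantees.
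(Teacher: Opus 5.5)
Your argument is correct and matches the reasoning the paper intends. The paper gives no separate proof here; it treats the observation as an immediate consequence of the preceding one, namely that every cluster of a radius-$R$ solution lies in a single component of $G_R$, so the number of components is at most the number of nonempty clusters, which is at most $k$; your surjective map from centers to components just spells out this count.
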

Hence we discard this choice and move on to the next guess.

\begin{observation}
   Suppose a connected component $H$ of $G_R$ is covered by $m (\leq k)$ radius $R$ clusters. Then, every side of its axis-parallel bounding box is $\mathcal{O}(kR)$. 
\end{observation}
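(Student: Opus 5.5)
The plan is to bound the diameter of $H$ by chaining the covering clusters along paths in $G_R$, and then to note that each side of the axis-parallel bounding box is at most the diameter. Let $X_1,\dots,X_m$ be the radius-$R$ clusters covering $H$, with centers $a_1,\dots,a_m$. Each $X_j$ lies in $B(a_j,R)$, so $\operatorname{diam}(X_j)\le 2R$. Take any two points $p,q\in V(H)$. Since $H$ is connected in $G_R$, there is a path $p=v_0,v_1,\dots,v_t=q$ in $H$, and every edge of the path has length at most $2R$.

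The key step is to compress this path so that it uses each cluster essentially once. For every index $j$ that occurs, replace the segment between the first and the last occurrence of a vertex of $X_j$ by a direct jump. This jump stays inside $X_j$ and so costs at most $2R$. After these shortcuts the sequence passes through at most $m$ distinct clusters. Consecutive clusters on the shortened walk are joined by a single original edge of length at most $2R$.

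Bounding each within-cluster jump and each between-cluster edge by $2R$ gives
\[
d(p,q)\le m\cdot 2R+(m-1)\cdot 2R\le 4mR\le 4kR .
\]
A cleaner version of the same count goes through the centers: $d(a_j,a_{j'})\le 4R$ whenever $X_j$ and $X_{j'}$ are joined by an edge, so the centers are linked by a walk of at most $m-1$ such steps. Then $d(p,q)\le R+4R(m-1)+R$.

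Finally, the extent of $V(H)$ along any coordinate axis is at most $\operatorname{diam}(V(H))\le 4kR$. So every side of the bounding box is $\mathcal{O}(kR)$.

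The only delicate point is the shortcutting argument, namely ensuring that the chain visits each cluster at most once. I expect to handle it by induction on the number of clusters visited: define an auxiliary "cluster graph" on $\{X_1,\dots,X_m\}$ in which two clusters are adjacent if some edge of $G_R$ joins them. This graph is connected because $H$ is connected and the clusters cover $H$. A shortest path in it has at most $m-1$ edges, which yields the bound directly.
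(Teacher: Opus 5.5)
Your proof is correct and takes the same approach as the paper. Your ``cleaner version'' through the centers is exactly the paper's argument: $\|a_j-a_{j'}\|\le 4R$ for clusters joined by an edge, a chain of at most $m-1$ such steps, hence $\operatorname{diam}(H)\le (4m-2)R\le (4k-2)R$, which bounds every side of the bounding box. Your auxiliary cluster-graph connectivity argument fills in the step ``any two centers can be connected through a path of at most $m-1$ such edges'' that the paper only asserts, and the path-shortcutting variant is also valid if the shortcuts are taken greedily (jump to the last occurrence of the current cluster, then follow the next edge).
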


\begin{proof}
Let the $m$ centers contanied in $H$ is $a_1,a_2,\dots,a_m$. Now suppose there is an edge $pq \in E(G_R)$ such that, center of $p$ is $a_i$ and center of $q$ is $a_j$ for some $i \neq j \in [m]$. By triangle inequality,  $\|a_i-a_j\| \leq \|a_i-p\|_2+\|p-q\|_2+\|q-a_j\|_2 \leq 4R$. Since, $H$ has $m$ such centers, any two centers can be connected through a path of at most $m-1$ such edges. Therefore their distance can be at most $4R(m-1)$. Finally, for any two arbitrary clients $x,y$ the distance can be at most $(4m-2)R$. So, the diameter of the component can be at most $(4m-2)R \leq (4k-2)R$. Length of each side of the bounding box is $\mathcal{O}(kR)$.
\end{proof}
If any component violates this diameter bound, reject the guess $R$.
\medskip

\noindent \textbf{Griding the Whole Instance:} For each component $H$ of $G_R$, consider its bounding box enlarged by $R$ in every direction. As every center to which a point of $H$ can be assigned will lie within distance $R$ of some points of $H$. We therefore place a grid of spacing $\delta= \frac{\epsilon R}{10\sqrt{d}}$ only inside these bounded regions and take the resulting grid points as candidate centers. We denote $F_H$ is the set of grid points for the component $H$ and $F=\bigcup_H F_H$.

All boxes use the same grid $\delta\mathbb{Z}^d$. We extend each enlarged box outward to the nearest grid hyperplanes and take all grid points in this extended box, removing duplicate locations in $F$. Thus, all distinct candidate centers are $\delta$-separated, even if boxes overlap, and the nearest grid point to every relevant radius-$R$ center is included.

\begin{lemma}\label{lem:grid-size}
For every fixed $R$, the total number of grid points are $|F|= \mathcal{O}\left(\frac{k^{d+1}d^{d/2}}{\epsilon^d}\right)$.
\end{lemma}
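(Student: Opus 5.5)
The bound will come from a per-component count followed by a sum over components, using only the structural facts established just before the lemma. Fix $R>0$ and assume $G_R$ was not rejected. By the observation on the number of components, $G_R$ has at most $k$ connected components. For each component $H$, the bounding-box observation shows every side of the axis-parallel bounding box of $H$ has length at most $(4k-2)R$. The algorithm enlarges this box by $R$ in every direction and then extends it outward to the nearest grid hyperplanes of $\delta\mathbb{Z}^d$. Each side of the resulting box therefore has length at most $(4k-2)R+2R+2\delta\le 4kR+2\delta$.

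Along one coordinate direction, the number of grid points of spacing $\delta=\frac{\epsilon R}{10\sqrt d}$ in an interval of length $\ell$ is at most $\ell/\delta+1$. Substituting $\ell\le 4kR+2\delta$ gives at most
\[
\frac{4kR}{\delta}+3=\frac{40k\sqrt d}{\epsilon}+3=\mathcal{O}\!\left(\frac{k\sqrt d}{\epsilon}\right),
\]
using $\epsilon\le 1$, $k\ge 1$ and $d\ge 2$ to absorb the additive constant. Taking the product over the $d$ coordinates bounds $|F_H|$ by
\[
\left(\frac{40k\sqrt d}{\epsilon}+3\right)^{d}=\mathcal{O}\!\left(\frac{k^d d^{d/2}}{\epsilon^d}\right),
\]
where the hidden constant (about $43^d$) depends only on $d$ and is constant since $d$ is fixed. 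Summing over the at most $k$ components gives $|F|\le\sum_H|F_H|=\mathcal{O}\!\left(k^{d+1}d^{d/2}/\epsilon^d\right)$. Removing duplicate locations when enlarged boxes overlap can only decrease $|F|$, so the union bound suffices.

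The main care point is the constant bookkeeping. The statement keeps the $d^{d/2}$ factor explicit, so the per-axis count must be bounded by $C\,k\sqrt d/\epsilon$ with an absolute constant $C$. Raising it to the $d$-th power then yields $C^d$, which is absorbed into the $\mathcal{O}(\cdot)$ for fixed $d$. This is legitimate because $d$ is fixed throughout the paper. Everything else is a direct application of the two observations, so I expect no real obstacle beyond stating the $+1$ and boundary-extension terms carefully.
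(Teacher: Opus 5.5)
Your proposal is correct and follows the paper's argument. It bounds the per-axis grid count by $\mathcal{O}(k\sqrt d/\epsilon)$ using the $\mathcal{O}(kR)$ box side, raises this to the $d$-th power, and sums over at most $k$ components. You simply make explicit the padding, the extension to grid hyperplanes, and the absorbed $C^d$ constant, which the paper leaves implicit.
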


\begin{proof}
   Side length of each bounding box is $\mathcal{O}(kR)$. So in one coordinate direction, the number of grid cell is $\mathcal{O}\left(\frac{k \sqrt{d}}{\epsilon}\right)$. There are total at most $k$ components over $d$ coordinates. Hence, total number of grid cells are, $\mathcal{O}\left(k\left(\frac{k \sqrt{d}}{\epsilon}\right)^d\right)=\mathcal{O}\left(\frac{k^{d+1}d^{d/2}}{\epsilon^d}\right)$.   
\end{proof}

Also, notice that, any radius-$R$ center serving a nonempty cluster are distance at most $\delta \sqrt{d}/2$ from a grid point of $F$. Define, $r=R+\frac{\delta \sqrt{d}}{2}=R+\frac{\epsilon R}{20} < (1+\epsilon)R$.

\begin{lemma}\label{lem:grid-snapping}
    If an exactly $(\alpha, \beta)$-fair radius $R$ solution exists with $k$ centers, then there is an exactly $(\alpha, \beta)$-fair solution of radius at most $r$ whose centers all lie in $F$ and which uses at most $k$ centers.
\end{lemma}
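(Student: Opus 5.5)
Start from an exactly $(\alpha,\beta)$-fair solution $(C,\phi)$ of radius $R$ with $|C|\le k$. Discard centers whose clusters are empty; this changes neither fairness nor radius, so every remaining center $a\in C$ serves a nonempty cluster $X_a=\phi^{-1}(a)$.

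\emph{Step 1: locate each center inside a grid region.} Fix such an $a$ and any point $p\in X_a$. By the first Observation, $X_a$ lies in a single component $H$ of $G_R$. Since $\|a-p\|\le R$ and $p$ lies in the axis-parallel bounding box of $H$, the center $a$ lies in that box enlarged by $R$ in every direction. The grid points of $F_H$ fill this enlarged box, which was extended outward to the nearest grid hyperplanes of $\delta\mathbb{Z}^d$. So the grid cell of $\delta\mathbb{Z}^d$ containing $a$ has all its corners in $F$. Let $f(a)\in F$ be the nearest such corner. Then $\|a-f(a)\|\le \delta\sqrt d/2$, because each coordinate is off by at most $\delta/2$. (A cruder bound $\delta\sqrt d$ would still give radius at most $R+\epsilon R/10<(1+\epsilon)R$, but we want the stated $r$.)

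\emph{Step 2: snap and merge.} Define $\phi'(p)=f(\phi(p))$ and $C'=\{f(a):a\in C\}\subseteq F$, so $|C'|\le|C|\le k$. For every $p$, the triangle inequality gives
\[
\|p-\phi'(p)\|\le\|p-\phi(p)\|+\|\phi(p)-f(\phi(p))\|\le R+\delta\sqrt d/2=r.
\]
Each nonempty cluster of $\phi'$ at a location $g\in C'$ is the disjoint union $\bigcup_{a:f(a)=g}X_a$ of exactly fair nonempty clusters. By Property~\ref{unifair}, this union is exactly $(\alpha,\beta)$-fair.

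\emph{Main obstacle.} The only delicate point is Step 1: we must ensure that the snapped location $f(a)$ actually belongs to $F$, rather than only to $\delta\mathbb{Z}^d$. This is where the enlargement by $R$ and the outward extension to grid hyperplanes are used, and the argument needs $X_a\neq\emptyset$ to tie $a$ to some component $H$; this is why empty clusters are dropped first. If several components' boxes overlap, a single grid point may be shared, but this is harmless: $F$ is a set, and the merging argument covers it.
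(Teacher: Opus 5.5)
Your proposal is correct and follows the same route as the paper's proof: snap each center to its nearest grid point, which is at distance at most $\delta\sqrt d/2$; bound the new radius by $r$ with the triangle inequality; and merge centers that land on the same grid point, which preserves exact fairness by Property~1 and does not increase the number of centers. Your extra check that the snapped point actually lies in $F$ makes explicit what the paper only asserts as ``clearly'' and in its remark that the nearest grid point to every relevant center is included. That check drops empty clusters and then places each remaining center in the enlarged, grid-padded bounding box of its cluster's component.
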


\begin{proof}
    Suppose there is an exactly $(\alpha,\beta)$-fair solution of optimal radius $R$ of $k$ center clustering. Consider an arbitrary center $a$ of this solution and map it to the nearest grid point $f \in F$. Clearly, the distance between $a$ and $f$ is at most $\delta \sqrt{d}/2$. Now suppose a client $p$ is assigned to $a$ in the actual solution. Then, after moving the center to $f$, $||p-f|| \leq ||p-a||+||a-f|| \leq r.$ Hence, considering all grid points as the candidate center can increase the radius at most $r \leq (1+\epsilon)R$. If several centers map to the same grid point, merge their clusters. Property~\ref{unifair} preserves exact fairness, and the number of centers cannot increase.
\end{proof}

Now, for each $f \in F$, consider a closed ball $D_f=B(f,r)$ of radius $r$. It consists of all clients that may be assigned to $f$ without exceeding radius $r$.

\begin{definition}
    A family of balls is called $p$-ply if every point of that space can belong to at most $p$ balls.
    Formally, for a family of balls $\{D_f:f \in F\}$, its ply is $p=\max_{x\in\mathbb{R}^d}|\{f\in F:x\in D_f\}|$.
\end{definition}

\begin{lemma}\label{bddply}
    The family $\{D_f:f \in F\}$ has ply $p=\mathcal{O}((\sqrt{d}/\epsilon)^d)$.
\end{lemma}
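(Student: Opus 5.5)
Fix any point $x\in\mathbb{R}^d$ and bound the number of candidate centers $f\in F$ with $x\in D_f$ by a volume (packing) argument. This works because the candidate centers are $\delta$-separated: after deduplication, all distinct points of $F$ lie on the single lattice $\delta\mathbb{Z}^d$. The membership $x\in D_f=B(f,r)$ holds exactly when $\|f-x\|\le r$. So the ply at $x$ equals $|F\cap B(x,r)|$, which is at most the number of lattice points of $\delta\mathbb{Z}^d$ in $B(x,r)$. Hence it suffices to bound the number of points of $\delta\mathbb{Z}^d$ in a ball of radius $r$, uniformly over the center $x$.

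For the count, associate to each lattice point $g$ its grid cube $g+[0,\delta)^d$. These cubes are pairwise disjoint and have volume $\delta^d$. If $g\in B(x,r)$, then its cube lies in $B(x,r+\delta\sqrt{d})$, since the cube's diameter is $\delta\sqrt d$. Comparing volumes gives
\[
|F\cap B(x,r)|\le \frac{\omega_d\,(r+\delta\sqrt d)^d}{\delta^d},
\]
where $\omega_d$ is the volume of the unit ball.

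Now substitute $r=R+\epsilon R/20\le \tfrac{21}{20}R$ (using $\epsilon\le1$) and $\delta=\epsilon R/(10\sqrt d)$. Then $\delta\sqrt d=\epsilon R/10\le R/10$, so $r+\delta\sqrt d\le 2R$. The bound becomes
\[
\omega_d\,(2R)^d\Big(\frac{10\sqrt d}{\epsilon R}\Big)^d=\omega_d\,(20)^d\Big(\frac{\sqrt d}{\epsilon}\Big)^d.
\]

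It remains to absorb constants. Since $\omega_d=\pi^{d/2}/\Gamma(d/2+1)\le \pi^{d/2}$, the factor $\omega_d 20^d$ is at most $C^d$ for an absolute constant $C$. This gives $p=\mathcal{O}((\sqrt d/\epsilon)^d)$, reading the $\mathcal{O}$ as allowing a factor exponential in $d$; such a factor is a constant because $d$ is fixed. (One could even sharpen this using $\omega_d\le (c/\sqrt d)^d$, but that is unnecessary.)

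The only delicate point, rather than a real obstacle, is justifying that distinct candidate centers from different components' boxes never coincide or crowd. This is handled by the paper's convention that all boxes use the common lattice $\delta\mathbb{Z}^d$ and that duplicates are removed, so the injection of points into disjoint cubes is valid globally, not just within one $F_H$. The bound is also independent of $x$ and of $k$, which is exactly what the lemma requires.
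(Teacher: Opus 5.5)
Your proof is correct and is essentially the paper's argument: a volume-packing bound on the $\delta$-separated candidate centers in $B(x,r)$. The paper packs the disjoint balls $B(f,\delta/2)$ into $B(x,r+\delta/2)$ to get $(2r/\delta+1)^d$, whereas you pack the disjoint lattice cubes into $B(x,r+\delta\sqrt{d})$; both bounds become $\mathcal{O}((\sqrt{d}/\epsilon)^d)$ only after a $C^d$ factor is absorbed, which is legitimate because $d$ is fixed.
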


\begin{proof}
   Consider an arbitrary point $x \in \mathbb{R}^d$. We want to count total number of balls $D_f=B(f,r)$ that can contain $x$. If $x\in D_f$, then by definition, $\|f-x\|\leq r$. So, every point $f$ whose closed ball contains $x$ must lie inside $B(x,r)$. Hence, it is sufficient to count then umber of grid points that lie in $B(x,r)$. Each grid point is $\delta$-separated. So, if we draw a ball $B(f, \delta/2)$ of radius $\delta/2$, then their interiors are disjoint. Also, each $f$ lies within distance $r$ of $x$, so all these small balls also lie inside $B(x,\delta/2+r)$, with a slightly larger radius.  Now suppose there ar $N$ such candidate centers which contains $x$. Then,  $N\cdot \operatorname{vol}\left(B\left(0,\frac{\delta}{2}\right)\right)
\le
\operatorname{vol}\left(B\left(0,r+\frac{\delta}{2}\right)\right).$ Since the volume of a $d$-dimensional ball of radius $t$ is proportional to $t^d$, $
N
\le
\left(
\frac{r+\delta/2}{\delta/2}
\right)^d
=
\left(
\frac{2r}{\delta}+1
\right)^d.$ Now $ \delta=\frac{\varepsilon R}{10\sqrt{d}}.$
Therefore, $
\frac{r}{\delta}
=
\mathcal{O}\left(\frac{\sqrt{d}}{\varepsilon}\right).$
Hence,
$N
=
\mathcal{O}\left(
\left(\frac{\sqrt{d}}{\varepsilon}+1\right)^d
\right).$ Thus, every point $x$ belongs to at most $\mathcal{O}_d(\epsilon^{-d}) $ candidate balls.
\end{proof}
\noindent \textbf{Existence of Sphere Separator:}
Here, we can use the sphere separator theorem for the family of closed balls with a bounded ply \cite{DBLP:journals/jacm/MillerTTV97}.
\begin{theorem}[Sphere separator]\label{sepexist}
    Let $\Gamma$ be a $p$-ply family of a $m$ balls in $\mathbb{R}^d$. There exists a $(d-1)$-dimensional sphere $\gamma$ such that, 
    \begin{itemize}
        \item At most $\mathcal{O}(C_d p^{1/d}m^{1-1/d})$ balls that intersects $\gamma$, where $C_d$ is a constant dependent on dimention $d$.
        \item At most $\beta_d m$ balls lie completely inside $\gamma$, where $\beta_d=\frac{d+1}{d+2} <1$.
        \item At most $\beta_d m$ balls lie completely outside $\gamma$.
    \end{itemize}
\end{theorem}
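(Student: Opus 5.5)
The statement is the Miller--Teng--Thurston--Vavasis sphere separator theorem for a $p$-ply family of $m$ balls. I would prove it with their conformal-map-and-centerpoint argument. The two balance conditions come from a centerpoint, and the crossing bound comes from averaging over a random great sphere.

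\textbf{Step 1: set up a point configuration.} Replace each ball $D\in\Gamma$ by its center $x_D\in\mathbb{R}^d$. Lift these centers to the unit sphere $S^d\subset\mathbb{R}^{d+1}$ by inverse stereographic projection $\Pi$.

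\textbf{Step 2: conformally recentre.} By the centerpoint theorem in $\mathbb{R}^{d+1}$, the lifted points have a centerpoint: every closed halfspace containing it contains at least $m/(d+2)$ of them. Compose $\Pi$ with a conformal map of $S^d$, namely a rotation followed by a dilation at the pole. A standard fixed-point/continuity argument then moves the centerpoint to the origin. Call the resulting conformal map $\Phi$.

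\textbf{Step 3: take a random separator.} Choose a uniformly random great sphere $G$ through the origin and set $\gamma=\Phi^{-1}(G)$. Since $\Phi$ is conformal, $\gamma$ is a sphere (or a hyperplane, which a perturbation turns into a sphere) in $\mathbb{R}^d$. Each closed open hemisphere of $G$ is a halfspace through the centerpoint, so its complement holds at least $m/(d+2)$ of the lifted points. Hence at most $\frac{d+1}{d+2}m$ ball centers lie on either side. Any ball lying strictly on one side has its center on that side, so at most $\beta_d m$ balls lie completely inside $\gamma$, and likewise outside.

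\textbf{Step 4: bound the expected number of crossing balls (main obstacle).} Conformal maps send balls to spherical caps, so each $D$ becomes a cap of angular radius $\theta_D$ on $S^d$. A random great sphere meets a cap of angular radius $\theta$ with probability $O(\min(1,\theta))$. The crux is to show
\[
\sum_D \min(1,\theta_D)=O_d\bigl(p^{1/d}m^{1-1/d}\bigr).
\]
I would proceed as follows:
\begin{itemize}
\item Ply is preserved under $\Phi$, so the caps still form a $p$-ply family.
\item Summing surface areas gives $\sum_D \min(1,\theta_D)^d=O_d(p)$.
\item Hölder's inequality then gives $\sum_D \min(1,\theta_D)\le m^{1-1/d}\bigl(\sum_D\min(1,\theta_D)^d\bigr)^{1/d}=O_d\bigl(p^{1/d}m^{1-1/d}\bigr)$.
\end{itemize}
The delicate part is large caps, with $\theta_D$ near $\pi$. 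These are handled by the $\min(1,\cdot)$ truncation, since by the ply bound at most $p$ caps can contain any fixed point. Some sphere achieves at most the expected count, which gives the first item.

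Since this is exactly the cited result, in the paper I would simply invoke the cited Miller--Teng--Thurston--Vavasis theorem rather than reprove it.
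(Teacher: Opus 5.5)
Your proposal is correct and matches the paper, which gives no proof of its own and simply invokes the Miller--Teng--Thurston--Vavasis theorem; your sketch (stereographic lift, centerpoint, conformal recentring, random great sphere, and the area-plus-H\"older bound on crossing caps) is a faithful outline of their argument. The one step you leave loose is Step~2: the origin is a centerpoint of the recentred points because the rotation-plus-dilation extends to a projective self-map of the closed unit ball that sends hyperplanes through the centerpoint $z$ of the lifted points to hyperplanes through the origin, and you should state or cite that fact rather than appeal to a generic fixed-point argument.
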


Hence by lemma \ref{bddply} and theorem \ref{sepexist} at each step applied to any set of at most $\ell$ selected candidate balls, there is a separator with $w_d(\ell)=\min\{\ell,\lceil C_d p^{1/d}\ell^{1-1/d}\rceil\}=\mathcal{O}_d(\ell^{1-1/d}/\varepsilon)$ selected balls which crosses the separator.

\noindent \textbf{Finding the Separator:} So, from the previous theorem, we know the existence of such a separator, but if we want to find such sphere separator, it is not possible to enumerate all such infinitely many possibilities. Hemce, we want to reduce the range of enumeration into finitely many choices. Consider a sphere separator in $\mathbb{R}^d$ specified as $(x_1,x_2,\dots,x_d, \rho)$, where $\rho \geq 0$ is the radius of that sphere separator. Now, a fix candidate ball $D_f=B(f,r)$ and le $h=||x-f||_2$. Notice that, the candidate ball $D_f$ has only three status relative to the separator :

1. $\rho-r>0$ and $h^2 <(\rho-r)^2$ iff the ball $D_f$ is completely inside the separator.

2. $h^2 > (\rho+r)^2$ iff $D_f$ lies completely outside the separator.

3. Neither of the above holds iff the ball is crossing or touching the separator.

Hence, for each $f$ we introduce three constant degree polynomials, 
$$g_f^-(x,\rho)=\|x-f\|^2_2-(\rho-r)^2,$$
$$g_f^+(x,\rho)=\|x-f\|^2_2-(\rho+r)^2,$$
$$h_f(x,\rho)=\rho-r.$$

For each candidate ball $D_f$, the signs of $g_f^-,g_f^+$ and $h_f$ completely determine whether $D_f$ lies inside the separator, outside or intersects it.  Since there are $M=|F|$ many candidate balls, we obtain total $3M+1$ polynomials in the $d+1$ parameters $(x_1,x_2,\dots,x_d,\rho)$. We also include the polynomial $\rho$ to restrict to nonnegative radii. Each of these polynomials has constant degree. All these polynomials divide the space into $M^{\mathcal{O}(d)}$ regions in a way that, inside one range, the sign of every polynomial remains unchanged. Hence, every separator sphere is treated as equivalent, as it gives exactly the same inside, outside, or crossing classification of all the candidate balls.

We therefore choose one representative point from every non-empty region with $\rho \geq 0$. This includes sign conditions in which some polynomials are zero, so separators touching candidate balls are not lost. Standard algorithms for realizable sign conditions compute these sample points in $M^{\mathcal{O}(d)}$ arithmetic operations for fixed $d$ \cite[Chapter~13]{basu2006algorithms}. The selected sphere represents all separator spheres in that region. 

Hence, we can sum up the whole concept and claim that,
\begin{claim}\label{clm:representative-separators}
    A list of $M^{\mathcal{O}(d)}$ representative spheres can be computed such that, for every sphere $\gamma$, the list contains a sphere separator $\gamma^*$ with the same inside, outside, and crossing classification for every candidate ball.
\end{claim}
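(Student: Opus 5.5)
The plan is to reduce the claim to a standard statement about sign conditions of a polynomial family, and then invoke the sample-point algorithm for realizable sign conditions. I will work in the parameter space $\mathbb{R}^{d+1}$ with coordinates $(x,\rho)=(x_1,\dots,x_d,\rho)$, and fix the family
$$\mathcal{G}=\{g_f^-,\,g_f^+,\,h_f : f\in F\}\cup\{\rho\},$$
which has $3M+1$ polynomials, each of degree at most $2$.

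\textbf{Step 1: the classification depends only on the sign vector.} I will first check, case by case, that the status of each $D_f$ relative to the sphere $\gamma=\partial B(x,\rho)$ is a function of the triple $(\operatorname{sign} g_f^-, \operatorname{sign} g_f^+, \operatorname{sign} h_f)$ evaluated at $(x,\rho)$.
\begin{itemize}
\item $D_f$ lies strictly inside $\gamma$ if and only if $h_f>0$ and $g_f^-<0$. Indeed, $\rho>r$ and $\|x-f\|<\rho-r$ together mean every point of $D_f$ is at distance less than $\rho$ from $x$.
\item $D_f$ lies strictly outside $\gamma$ if and only if $g_f^+>0$.
\item In every other case the ball meets $\gamma$, i.e.\ it crosses or touches it.
\end{itemize}
The converse directions are routine triangle-inequality arguments. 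Consequently, two parameter points with the same sign vector on $\mathcal{G}$ induce the same inside/outside/crossing partition of $\{D_f\}$. The extra polynomial $\rho$ only serves to discard invalid parameters with $\rho<0$.

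\textbf{Step 2: count and sample the sign conditions.} A sign condition is a vector in $\{-1,0,+1\}^{\mathcal{G}}$, and it is realizable if some point of $\mathbb{R}^{d+1}$ attains it. By the standard bound on realizable sign conditions (Milnor--Thom / Basu--Pollack--Roy), $s$ polynomials of degree at most $D$ in $k'$ variables have at most $(O(sD/k'))^{k'}$ realizable sign conditions, counting zeros. Here $s=3M+1$, $D=2$ and $k'=d+1$, which gives $M^{O(d)}$ realizable conditions.

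The sampling algorithm of \cite[Chapter~13]{basu2006algorithms} returns at least one point in every realizable sign condition, using $s^{k'+1}D^{O(k')}=M^{O(d)}$ arithmetic operations. I keep only the returned points with $\rho\ge 0$ and output the corresponding spheres.

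\textbf{Step 3: conclude.} Given an arbitrary sphere $\gamma$ with parameters $(x,\rho)$, $\rho\ge0$, its sign vector is realizable, so the list contains a point $(x^*,\rho^*)$ with the same sign vector. That point has $\rho^*\ge0$, since the sign of $\rho$ is part of the vector. By Step 1, the sphere $\gamma^*$ it defines has the same classification for every $D_f$. Degenerate spheres with $\rho=0$ are harmless: they are points, and the classification is still well defined.

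The main obstacle is making sure boundary (touching) configurations are not lost. This is exactly why I need the sampling algorithm to cover all realizable sign conditions including zeros, rather than only the full-dimensional cells of the arrangement. The cited algorithm does guarantee this, and I will also verify that the tangency cases in Step 1 are consistently assigned to "crossing".
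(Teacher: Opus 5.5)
Your proposal is correct and follows essentially the same route as the paper. Both use the same three degree-two polynomials $g_f^-,g_f^+,h_f$ per candidate ball, plus $\rho$, on the parameter space $(x,\rho)\in\mathbb{R}^{d+1}$. Both observe that the sign vector fixes each ball's inside/outside/crossing status, and both sample every realizable sign condition, zeros included, with the Basu--Pollack--Roy algorithm. Your explicit checks of the sign-to-classification correspondence and of the $M^{\mathcal{O}(d)}$ count fill in details the paper only asserts.
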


\noindent \textbf{Solving Three Parts Recursively:} In the previous section, we showed that a separator sphere divides the selected candidate balls into three groups. So, the balls lying completely inside and completely outside can be handled separately. So, we can solve these two cases through recursion. However, a selected ball that crosses the separator may have assigned clients from both sides. So, the cluster formed by this center can not be completed immediately and has to remain active during the solution of all its subproblems. 
After choosing a separator, we define a recursive state as $sol(P', F ', A, l)$ where,
\begin{itemize}
    \item $P'=$ the set of clients that belong to the current region handled by this current recursive call. For the root, i.e., the 1st recursive call, $P=P'$. 
    \item $F'=$ all the candidate centers whose selection is not yet decided and closed balls of each center lies completely within that current region. At the start, $F=F'.$ 
    \item $A=$ The set of centers that have already been selected but whose clusters are not yet complete. We call them active centers.
    \item $l=$ The number of additional centers that may still be selected from $F'$.
\end{itemize}

For a fixed call, the table is indexed only by the bucket vector of the inherited active centers in $A$. Each entry keeps one integral witness: its selected centers, the assignment of all clients in $P'$, and the exact color counts of that assignment. Every nonempty cluster at a newly selected center is complete within this region and must pass the relaxed fairness test. Inherited active centers are not tested yet, since their contributions from other regions are still missing. If several witnesses have the same bucket vector, keep any one of them. The clients and candidate centers in a child call are always restricted to the current region. For every active center $a \in A$, the current recursive call also keeps track of how many clients of color $i$ are assigned to $a$ from the current region. We write this as a vector for each center $a$, $x_a=(x_{a,1},x_{a,2},\dots,x_{a,c})$, where, $x_{a,i}$ denotes the number of color $i$ clients assigned to $a$ from this region. When different recursive parts are later combined, their contribution to the same active center are added together. But if a state wants to keep the exact counts and enumerate on all possible values for $x_{a, i}$, it will result in a lot of states. So, to keep the number of states small, the algorithm stores a rounded version of the counts.
\medskip

\noindent \textbf{Rounding the assigned points count:} For an active center $a\in A$, the DP will store the number of clients of each color that have already been assigned to $a$. These numbers can be any integers between $0$ and $n$. To decrease the number of DP states, we group the nearby counts into multiplicative buckets and store only the corresponding bucket indices. An active center may remain active through several recursive levels. Hence, the count will be updated every time before the center is finally completed. We first bound how many such recursive levels are possible. Let $D_d =\left\lceil\log_{1/\beta_d} k \right\rceil + 1=\mathcal{O}_d(\log(k+1)),$ where $\beta_d = \frac{d+1}{d+2} < 1.$ At a recursive node with remaining center $\ell$, each child has at most $\beta_d \ell$. Thus, along any root-to-leaf recursion path, the number of undecided centers decreases as
$k,\ \beta_d k,\ \beta_d^2 k,\ldots$ until the value become at most $1$. Therefore, $D_d$ is a safe upper bound on the recursion depth.

Now we choose the width of the buckets. Let $\theta=\lambda^{1/(4(D_d+2))} > 1.$ The value of $\theta$ is chosen very close to $1$ as the same count may be rounded at different recursive levels. One rounding step will introduce at most a factor-$\theta$ change. There are at most $D_d+2$ relevant distortion steps along a recursive path, and adding contributions from different paths does not increase the error factor. Hence the total error can be at most  $\theta^{D_d+2}=\lambda^{1/4}.$

For a positive integer $x$, define $\operatorname{bucket}(x)=\left\lfloor\log_\theta x\right\rfloor.$ If $b=\operatorname{bucket}(x)$, then $\theta^b \le x < \theta^{b+1}.$ Hence, bucket $b$ represents the integer interval $I_b=\left\{z\in\mathbb{Z}_{>0} :\theta^b \le z < \theta^{b+1}\right\}.$ A special zero bucket represents the value $0$. This is necessary because multiplicative rounding cannot represent zero, and a zero color count must remain exactly zero. Since every color count is at most $n$, the number of possible buckets for one center-color pair is at most $L=2+\left\lfloor \log_\theta n \right\rfloor.$ By the definition of $\theta$, $\log_\theta n=\frac{\log n}{\log\theta}=\frac{4(D_d+2)\log n}{\log\lambda}.$ Therefore,
$L=\mathcal{O}\left(1+ \frac{D_d\log n}{\log\lambda} \right).$ Buckets whose intervals contain no integer between $1$ and $n$ are simply ignored. The following simple observation shows why this rounding is useful.

\begin{observation}\label{obs:thetaerror}
    If $x$ and $y$ are positive integers in the same bucket, then $\frac{x}{\theta} < y < \theta x.$
\end{observation}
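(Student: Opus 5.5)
The plan is to argue directly from the definition of the bucket intervals: positive integers $x$ and $y$ lie in the same bucket exactly when they share the value $b=\operatorname{bucket}(\cdot)$. The zero bucket plays no role here, because both integers are positive. So the only thing to use is the two-sided bound $\theta^b \le z < \theta^{b+1}$, which holds for every $z\in I_b$ by the definition $\operatorname{bucket}(z)=\lfloor \log_\theta z\rfloor$ together with $\theta>1$.

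The steps are as follows. Let $b$ be the common bucket index. Then
\[
\theta^b \le x < \theta^{b+1}
\quad\text{and}\quad
\theta^b \le y < \theta^{b+1}.
\]
For the upper bound on $y$, combine $y<\theta^{b+1}=\theta\cdot\theta^b$ with $\theta^b\le x$. This gives $y<\theta x$.

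For the lower bound, apply the same argument with the roles of $x$ and $y$ exchanged. This gives $x<\theta y$, which is equivalent to $y>x/\theta$. Both inequalities are strict because the upper endpoint $\theta^{b+1}$ of the interval is excluded.

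No step is a real obstacle. The one point that needs care is keeping the strictness straight: the strict inequality in each conclusion comes from the half-open upper end of $I_b$, and the non-strict lower end $\theta^b$ is used only as a multiplier. If desired, one can add that the bound is symmetric in $x$ and $y$. This symmetry is what Property~\ref{sqmiu} needs later with $\mu=\theta$, but it is not required for the observation itself.
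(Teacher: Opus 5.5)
Your proof is correct and follows the paper's argument essentially line for line. Both start from $\theta^b \le x,y < \theta^{b+1}$, combine the strict upper bound for one integer with the closed lower bound for the other to get $x<\theta y$, and then exchange the roles of $x$ and $y$ to get $y<\theta x$.
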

\begin{proof}
    If $x,y>0$ and are in the same bucket $I_b$, then $\theta^b \le x,y < \theta^{b+1}.$ Since, $x < \theta^{b+1}$ and $y \geq \theta^b$ we get $x < \theta y.$ Thus, $y > \frac{x}{\theta}.$ Similarly, by exchanging the roles of $x$ and $y$, we have $y < \theta x.$ Therefore, $\frac{x}{\theta} < y < \theta x.$ Hence, replacing one witness count by another count from the same bucket changes the count by less than a multiplicative factor of $\theta$.
\end{proof}
 
\begin{observation} \label{obs:sumerror}
Let $z_1,z_2,\dots,z_t \geq 0$ and $z_1',z_2',\dots,z_t' \geq 0$. If for all $j\in[t]$ and for some $\eta \geq 1$ $\frac{z_j}{\eta} \leq z'_j \leq \eta z_j$, then $$\sum_j \frac{z_j}{\eta}\leq\sum_j z'_j\leq\eta \sum_j z_j.$$
\end{observation}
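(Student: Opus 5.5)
The inequality follows by applying the two-sided bound $\frac{z_j}{\eta}\le z_j'\le \eta z_j$ termwise and summing over $j\in[t]$.

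\emph{Lower bound.} For each $j$ we have $z_j/\eta\le z_j'$. Summing over all $j$ gives $\sum_j z_j/\eta\le\sum_j z_j'$. Since $\eta$ is a fixed positive constant independent of $j$, the left side equals $\frac{1}{\eta}\sum_j z_j$.

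\emph{Upper bound.} In the same way, $z_j'\le \eta z_j$ for each $j$. Summing gives $\sum_j z_j'\le \sum_j \eta z_j=\eta\sum_j z_j$.

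Only two facts are used: finite sums of inequalities preserve the direction, and a constant factor can be pulled out of a sum. Nonnegativity of the $z_j,z_j'$ is not needed for the summation itself. It matters only for how the bound is used later: together with $\eta\ge 1$ it makes the multiplicative bound meaningful, and zero counts stay exactly zero, matching the separate zero bucket.

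The case $t=0$ is trivial, since all sums are $0$. There is no real obstacle here. The only point worth stressing is the one the paper relies on: adding approximate counts from different recursive branches does not increase the distortion factor $\eta$, because a common factor passes through the sum. This is what lets the error in the bucketing argument accumulate only over rounding steps along a single path, and not over the number of merged contributions.
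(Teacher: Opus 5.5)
Your proof is correct and is the intended argument: summing the termwise bounds over $j$ and factoring the constant $\eta$ out of the sum gives both inequalities. The paper states this observation without proof and later uses it exactly this way, to argue that adding contributions does not increase the distortion factor.
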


\begin{lemma}\label{lem:fairness-preservation-rounding}
    Suppose a nonempty cluster is exactly $(\alpha,\beta)$-fair in a reference assignment. Let $x_1,x_2,\dots,x_c$ be its exact color counts, and let $x_1',x_2',\dots,x_c'$ be the actual counts of another integral cluster. Assume that $x_i/\eta\leq x_i'\leq\eta x_i$ for every $i\in[c]$, where $\eta=\theta^{D_d+2}\leq\lambda^{1/4}$. These bounds preserve zero counts and imply that the new cluster is nonempty. Consequently, $$\frac{\alpha_i}{\lambda }\leq \frac{x_i'}{\sum_j x_j'} \leq \min\{1, \lambda \beta_i\} \text{ for every color } i \in [c].$$ 
\end{lemma}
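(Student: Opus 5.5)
The plan is to derive everything from Property~\ref{sqmiu} with $\mu=\eta$, together with the exact fairness of the reference cluster. Two preliminary facts come first. If $x_i=0$, then $x_i'\le \eta x_i=0$, so $x_i'=0$; thus zero counts are preserved. Since the reference cluster is nonempty, some $x_j>0$, and then $x_j'\ge x_j/\eta>0$. Hence $N'=\sum_j x_j'>0$ and the proportions are well defined.

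Next, set $N=\sum_j x_j>0$ and invoke Property~\ref{sqmiu} with $\mu=\eta\ge 1$. This gives
\[
\frac{x_i}{\eta^2 N}\le \frac{x_i'}{N'}\le \eta^2\frac{x_i}{N}
\qquad\text{for every } i\in[c].
\]
Equivalently, one can sum the hypotheses via Observation~\ref{obs:sumerror} to get $N/\eta\le N'\le \eta N$ and then divide. Exact fairness of the reference cluster gives $\alpha_i\le x_i/N\le\beta_i$.

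Combining these,
\[
\frac{\alpha_i}{\eta^2}\le \frac{x_i'}{N'}\le \eta^2\beta_i .
\]
Since $\eta\le\lambda^{1/4}$, we have $\eta^2\le\lambda^{1/2}\le\lambda$ (using $\lambda>1$). Therefore $x_i'/N'\ge \alpha_i/\lambda$ and $x_i'/N'\le\lambda\beta_i$. The remaining bound $x_i'/N'\le 1$ holds trivially because $x_i'\le N'$. Together these give the upper bound $\min\{1,\lambda\beta_i\}$.

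There is no real obstacle here. The only points needing care are the nonemptiness argument, so that we never divide by zero, and checking that the squared distortion from Property~\ref{sqmiu} still fits inside $\lambda$. The exponent $1/4$ in the choice of $\theta$ leaves ample slack for this.
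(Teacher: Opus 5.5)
Your proof is correct and follows the paper's argument: you apply Property~2 with $\mu=\eta$, combine it with the exact fairness $\alpha_i\le x_i/N\le\beta_i$, absorb the factor via $\eta^2\le\lambda^{1/2}\le\lambda$, and use the trivial bound $x_i'/N'\le 1$ to get $\min\{1,\lambda\beta_i\}$. You also prove explicitly that zero counts are preserved and that $N'>0$, which the paper only asserts in the statement, so your version is slightly more complete.
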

\begin{proof}
    Set $\mu=\eta=\theta^{D_d+2}\leq\lambda^{1/4}$. The count-distortion bound is an assumption of this lemma; Lemma~\ref{recur-step} proves it for the witness following the reference assignment. By Property~\ref{sqmiu}, changing a factor-$\mu$ in all color counts changes every color proportion by at most a factor of $\mu^2$. Since $\mu^2\leq\lambda^{1/2}\leq\lambda,$ an original proportion in $[\alpha_i,\beta_i]$ becomes a proportion in $\left[\frac{\alpha_i}{\lambda},\,\lambda\beta_i\right].$ Since a proportion is at most $1$, the upper bound is $\min\{1,\lambda\beta_i\}.$ Hence the result follows.
\end{proof}

\noindent \textbf{Assigning the points through flow concept:} Now we show how the algorithm obtains an actual assignment of clients once the bucket intervals are fixed. We do not solve the whole fair-assignment problem directly by flow. Instead, the DP first fixes a bucket for every center-color pair, and the flow only checks whether these bucket ranges can be realized by an assignment. Let $Y\subseteq P$ be the set of clients that must be assigned to a fixed set of selected centers $A'$. For every $a\in A'$ and color $i$, suppose a bucket $b_{a, i}$ has been fixed. Let $\ell_{a,i}=\min I_{b_{a,i}}$ and $u_{a,i}=\max\bigl(I_{b_{a,i}}\cap\{0,1,\ldots,n\}\bigr).$ Thus, center $a$ must receive between $\ell_{a,i}$ and $u_{a,i}$ clients of color $i$. For the zero bucket, both values are $0$. For every color $i$, construct the following flow network:
\begin{itemize}
    \item add an edge from the source $s$ to every client $p\in Y\cap P_i$ with capacity $[1,1]$;
    \item add an edge from a client $p$ to a center $a\in A'$ whenever $\|p-a\|_2\le r,$ with capacity $[0,1]$; add an edge from every center $a$ to the sink $t$ with lower and upper capacities  $[\ell_{a,i},u_{a,i}].$
\end{itemize}

We add an edge from $t$ to $s$ with lower and upper capacities both equal to $|Y\cap P_i|$, and ask for a feasible circulation. Subtracting the lower bounds and adding a super-source and a super-sink reduces this to a standard maximum-flow feasibility test \cite{DBLP:books/daglib/0069809}. No costs are needed. The first type of edge ensures that every client is assigned exactly once. The second type allows a client to be assigned only to a center within distance $r$. The last type ensures that the number of color-$i$ clients assigned to each center lies in its fixed bucket interval. Since every client belongs to exactly one color class and there is no capacity shared between different colors at a center, the $c$ color classes can be handled by $c$ independent flow computations.

\begin{lemma}\label{lem:flow-feasibility-completeness}
    For fixed bucket intervals, feasibility of assigning $Y$ to $A'$ can be decided by $c$ independent polynomial-time flow computations. If a feasible assignment exists, an integral assignment can be obtained. Moreover, if there is a valid integral assignment of $Y$ to $A'$, then one of the bucket vectors enumerated by the algorithm gives a feasible flow instance.
\end{lemma}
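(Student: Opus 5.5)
The proof has three parts: decompose by color, invoke the standard theory of flows with lower bounds, and show that the enumeration of bucket vectors is complete.

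\emph{Decomposition by color.} Every client lies in exactly one class $P_i$. The only constraints coupling clients are the per-center, per-color bounds $[\ell_{a,i},u_{a,i}]$. Hence an assignment $\phi:Y\to A'$ that respects the eligibility condition $\|p-\phi(p)\|_2\le r$ and the bucket intervals is exactly a tuple $(\phi_1,\dots,\phi_c)$, where each $\phi_i$ is such an assignment of $Y\cap P_i$. Conversely, any tuple of per-color assignments combines into one assignment of $Y$. So feasibility for $Y$ is equivalent to feasibility of each of the $c$ networks described above.

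\emph{Each network.} An integral feasible circulation in the network for color $i$ corresponds bijectively to an assignment $\phi_i$. The source edges with bounds $[1,1]$ force each client to send exactly one unit. Since the client-to-center edges have capacity $[0,1]$, integrality means each client selects exactly one eligible center. The center-to-sink bounds then say precisely that $\ell_{a,i}\le|\phi_i^{-1}(a)|\le u_{a,i}$. Feasibility of a circulation with lower bounds reduces, by the standard transformation (subtract lower bounds and add a super-source and super-sink), to checking whether a max-flow saturates all super-source edges. This takes polynomial time in $|Y|+|A'|$. All bounds are integers, so the integrality theorem for max-flow gives an integral feasible circulation whenever a fractional one exists. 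Reading it off yields $\phi_i$.

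\emph{Completeness of the enumeration.} Let $\phi$ be any valid integral assignment of $Y$ to $A'$, and set $x_{a,i}=|\phi^{-1}(a)\cap P_i|\in\{0,\dots,n\}$. Choose $b_{a,i}$ to be the zero bucket if $x_{a,i}=0$, and $\operatorname{bucket}(x_{a,i})$ otherwise. Then $x_{a,i}\in I_{b_{a,i}}\cap\{0,\dots,n\}$, so $\ell_{a,i}\le x_{a,i}\le u_{a,i}$. This bucket is among the at most $L$ buckets enumerated per center-color pair, because every nonempty bucket meeting $\{1,\dots,n\}$ is listed. The circulation induced by $\phi$ is therefore feasible for this bucket vector.

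The only step needing care is this last one, namely that the enumerated buckets cover every count in $\{0,\dots,n\}$. This holds because the zero bucket is kept separately and the intervals $I_b$ partition the positive integers.
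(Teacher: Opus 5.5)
Your proposal is correct and follows essentially the same route as the paper: the color networks are independent because the color classes partition the clients and share no capacity, integral lower and upper bounds give an integral feasible flow, and completeness follows by choosing, for each center--color pair, the bucket containing the actual count. You add some detail the paper leaves implicit, namely the explicit correspondence between integral circulations and assignments and the check that every count in $\{0,\dots,n\}$ lies in an enumerated bucket, but the argument is the same.
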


\begin{proof}
For a fixed color $i$, every client sends exactly one unit of flow and can send it only to a center within distance $r$. The lower and upper capacities on the center-to-sink edges ensure that each center receives a number of color-$i$ clients lying in its fixed bucket interval. The color networks are independent because the color classes form a partition of the clients and there is no shared capacity between different colors. All capacities and lower bounds are integral. Hence, whenever the flow instance is feasible, an integral feasible flow exists, which directly gives an assignment of the clients. Now suppose a valid integral assignment already exists. For every center $a$ and color $i$, consider the exact number of color-$i$ clients assigned to $a$, and choose the bucket containing this number. Since the algorithm enumerates all bucket vectors, it eventually considers this choice. The given assignment itself satisfies all the corresponding lower and upper bounds, so the resulting flow instance is feasible. 
\end{proof}

\noindent \textbf{How one recursive call works :} After the previous steps, we now have all the information needed for the recursion. We next explain how one recursive step splits the current problem into inside and outside parts and handles the centers that cross the separator. Consider a call, $\mathrm{sol}(P',F',A,\ell) \text{ with } \ell >1.$
For every fix representative separator sphere $\gamma$, let $X_{\gamma}=\{f \in F': D_f \text{ crosses } \gamma\}.$
The algorithm guesses which of these candidate centers (crossings $\gamma$) are actually selected. It therefore again enumerates every set $Q \subset X_\gamma$ with $|Q| \leq w_d(\ell)$.
Hence, the candidate centers in $Q$ are the selected centers, while all the candidate centers in $X_\gamma \setminus Q$ are rejected in this branch. We then set $A'=A \cup Q$. 
The centers in $A'$ are active because they may receive client points from both sides of the separator. Now come the non-intersecting parts- let $F_{in}$ and $F_{out}$ be the non-crossing candidate centers whose balls lie completely inside and completely outside of $\gamma$. Similarly, $P_{in}, P_{out} $, and $P_{cross}$ are the client points that lie inside, outside, and on the separator, respectively. The algorithm then guesses the possible number of centers $\ell_{in}, \ell_{out}$ from both parts such that $\ell_{in},\ell_{out} \leq \beta_d \ell \text{ and } |Q|+\ell_{in}+\ell_{out} \leq \ell.$ The sets $P_{in},P_{out},P_{cross}$ are a partition of $P'$, and both $F_{in}$ and $F_{out}$ are subsets of $F'$. A representative separator need not place clients on the same sides as the original separator: preserving the candidate-ball classifications is sufficient, since every client served by a noncrossing ball still lies on the same side as that ball. All other assigned clients can use the selected crossing centers.

Here, it recursively solve $\mathrm{sol}(P_{in},F_{in},A',\ell_{in}) \text{ and } sol(P_{out},F_{out},A',\ell_{out})$. After both recursive calls return, their contribution to every center are combined. The boundary clients in $P_{cross}$ are then assigned to the active centers using the bucketing concept and finding a feasible flow as described in the previous steps. A ball lying completely inside $\gamma$ can not contain an outside or boundary client point. A ball completely lying outside $\gamma$ can not contain an inside or boundary client point. Consequently, after the selected crossing set $Q$ is guessed, every interaction between two recursive sides goes through only active centers in $A'=A \cup Q.$

When combining states, add their exact witness counts, rather than bucket endpoints. A compatible pair consists of two witnesses for these disjoint client sets and the same active-center set $A'$; the inside and outside counts need not be equal. For each boundary bucket vector, one integral feasible flow is sufficient. Check every nonempty newly completed cluster using its actual counts and delete unused newly completed centers. Keep every inherited center in $A$, including centers with zero contribution in the current region. Then form the bucket vector of the inherited centers and keep one complete witness for each vector. Thus, compression changes which assignment is retained; it never changes the counts of a stored assignment. A center can be active for two reasons. It may have become active at the current separator, or it may already have been active from an earlier separator. Based on this, we make the following two observations.

\begin{observation}
    For every center $q \in Q$, after combining the inside, outside, and boundary assignments, the whole cluster of $q$ is known. Therefore, its fairness can be checked, and $q$ can be removed from the active set.
\end{observation}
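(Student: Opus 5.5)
The observation says that for $q\in Q$ the whole cluster of $q$ is known after combining. The natural route is to show that every client that can possibly be assigned to $q$ lies in the current region $P'$. Once that holds, the inside witness, the outside witness and the boundary flow together account for every client of $q$. I will also use that no call outside the current subtree ever sees $q$.

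The first step is locality. Since $q$ was chosen at this call, $q\in X_\gamma\subseteq F'$. By the invariant on $F'$ in the recursive state $sol(P',F',A,l)$, the ball $D_q=B(q,r)$ lies completely within the current region. Any client assigned to $q$ is within distance $r$ of $q$, because the flow network only has client–center edges when $\|p-a\|_2\le r$. Such a client therefore lies in $D_q$, hence in the current region, hence in $P'$. I also need that clients outside the current region are never handed to $q$. The sibling subproblems created at ancestor calls receive an active set that does not contain $q$, since $q$ only enters $A'=A\cup Q$ here. So no other branch of the recursion can assign a client to $q$.

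The second step is completeness of the accounting. The sets $P_{in},P_{out},P_{cross}$ partition $P'$. The inside call $sol(P_{in},F_{in},A',\ell_{in})$ returns a witness assigning every client of $P_{in}$, together with its exact counts toward each member of $A'\supseteq Q$; the outside call does the same for $P_{out}$. The integral feasible flow of Lemma~\ref{lem:flow-feasibility-completeness} then assigns every client of $P_{cross}$. Witness counts are added exactly, not as bucket endpoints. Hence the combined vector $x_q$ equals the true color-count vector of the full cluster of $q$. Its relaxed fairness can then be checked, and $q$ dropped from the active set passed upward, so only $A$ is kept, as the algorithm prescribes.

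The main obstacle is making the invariant ``$D_f$ lies within the current region for all $f\in F'$'' precise. The sticking point is that the region is defined by nested separator spheres, while clients on a sphere go to $P_{cross}$ rather than to either side. I would argue it by induction on depth. At the root it is trivial. At a child, $F_{in}$ consists of balls strictly inside $\gamma$ and inside the parent region, and $P_{in}$ consists of parent clients strictly inside $\gamma$. So a client of the parent region inside $D_f$ for $f\in F_{in}$ lies in $P_{in}$, and symmetrically for the outside. This also shows that a child never needs clients outside its own $P'$ for its own new centers, which closes the argument.
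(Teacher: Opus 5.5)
Your proposal is correct and matches the reasoning the paper relies on; the paper states the observation without a separate proof. The argument runs as follows: $q\in X_\gamma\subseteq F'$ has its ball $D_q$ inside the current region, so every client it can serve lies in $P'=P_{\mathrm{in}}\cup P_{\mathrm{out}}\cup P_{\partial}$; only calls in the current subtree have $q$ as an active center; and adding the exact counts of the inside, outside and boundary witnesses gives the full color-count vector of its cluster. Your induction showing that $D_f$ stays inside the nested region, using the strict inside/outside classification of both balls and clients, just makes explicit what the paper builds into its definition of $F'$.
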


\begin{observation}
 Let $a\in A$ be a center that became active at an earlier separator. At the current recursive step, we may see only part of the clients that will finally be assigned to $a$. Some other clients assigned to $a$ may lie in another region created by that earlier separator. Therefore, the whole cluster of $a$ is not known yet, so $a$ must remain active.   
\end{observation}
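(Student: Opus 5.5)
The statement is the observation that an inherited center $a\in A$ must stay active. I would argue it by tracing where $a$ came from in the recursion tree and showing that its cluster can straddle regions the current call cannot see.

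First I fix how $a$ entered the active set. Since $a\in A$ was inherited, there is an ancestor call at which $a$ belonged to the guessed crossing set $Q$ of some separator $\gamma_0$. By the definition of $X_{\gamma_0}$, the ball $D_a$ crosses $\gamma_0$. So $D_a$ may contain clients of $P_{in}$, of $P_{out}$ and of $P_{cross}$ at that ancestor.

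Next I use the fact that the current call descends into only one side of $\gamma_0$. The clients of a child call are restricted to the current region, and $P_{in}$, $P_{out}$, $P_{cross}$ partition $P'$. Hence the current client set $P'$ is contained in exactly one of $P_{in}$ or $P_{out}$ of that ancestor; say it lies in $P_{in}$. Clients of $D_a$ in $P_{out}$ and $P_{cross}$ of the ancestor are therefore invisible here. They are assigned either in the sibling subtree or in the ancestor's own boundary flow step.

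The main point, and the only mild obstacle, is to show this missing part can be nonempty, so that $a$ cannot be declared complete. I would build a simple configuration in which a reference fair solution assigns to $a$ clients from both sides of $\gamma_0$. This is possible because $D_a$ crosses $\gamma_0$, so it contains points strictly on each side. If clients from both sides are assigned to $a$, then the counts $x_{a,i}$ seen in the current region are only a partial sum of the true cluster counts. Fairness is a ratio condition, and by Property~\ref{unifair} only unions of fair parts are guaranteed fair, not the individual parts. So testing fairness on the partial counts could wrongly reject a valid branch.

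It follows that completion, and hence the fairness test, must be deferred to the ancestor where $a$ was introduced. There, as in the preceding observation for $Q$, the inside, outside and boundary contributions are all combined. Keeping $a$ in $A$ with its bucketed counts is therefore necessary.
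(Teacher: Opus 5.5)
Your proposal is correct and follows the same reasoning as the paper, whose justification is contained in the observation itself: $a$ entered $A$ as a crossing center of an ancestor separator, so part of its cluster can lie in the sibling region or on that separator's boundary, and the fairness test must therefore wait until the contributions are combined at that ancestor. One small precision: in the paper, ``crossing'' includes tangency, so $D_a$ need not contain points strictly on both sides. This does not affect your argument, since the claim is only that the missing part \emph{may} be nonempty, which you can witness with a genuinely crossing configuration or with boundary clients.
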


\begin{algorithm}[H]
\caption{Group-Fair Euclidean $k$-Center in $\mathbb{R}^d$}
\label{alg:outer}
\begin{algorithmic}[1]

\STATE If $P=\emptyset$, return the empty clustering. Set $k\gets\min\{k,n\}$.

\STATE Construct the candidate-radius set $\mathcal{R}$ from all nonempty subsets of $P$ of size at most $d+1$

\FOR{each $R\in\mathcal{R}$ in increasing order}

    \IF{$R=0$}
        \STATE Test the zero-radius case described above; return if feasible, otherwise continue.
    \ENDIF

    \STATE Construct the graph $G_R$ on $P$, where
    $pq\in E(G_R)$ if $\|p-q\|_2\le 2R$.

    \IF{$G_R$ has more than $k$ connected components}
        \STATE \textbf{continue}.
    \ENDIF

    \IF{some connected component $H$ satisfies
    $\operatorname{diam}(H)>(4k-2)R$}
        \STATE \textbf{continue}.
    \ENDIF

    \STATE For every component, construct its enlarged bounding box.

    \STATE Set
    $\displaystyle
    \delta=\frac{\varepsilon R}{10\sqrt d}
    $
    and use the common grid $\delta\mathbb{Z}^d$ in the padded enlarged boxes.

    \STATE Let $F$ be the resulting set of candidate centers.

    \STATE Set
    $\displaystyle
    r=R+\frac{\delta\sqrt d}{2}.
    $

    \FOR{each $f\in F$}
        \STATE Define the candidate ball $D_f=B(f,r)$.
    \ENDFOR

    \STATE Compute the finite family of relevant separator-sphere types.

    \STATE $\mathcal{T}\gets
    Sol(P,F,\emptyset,k)$.

    \IF{$\mathcal{T}$ contains a feasible state with no active centers}
        \STATE Recover its integral clustering.
        \STATE \textbf{return} the clustering.
    \ENDIF

\ENDFOR

\STATE \textbf{return} infeasible.

\end{algorithmic}
\end{algorithm}

\begin{algorithm}[H]
\caption{$Sol (P', F', A,\ell)$}
\label{alg:solve}
\small 
\begin{algorithmic}[1]

\IF{$\ell\le 1$}
    \STATE Execute the base case and \textbf{return} the resulting table.
\ENDIF

\STATE Initialize an empty table $\mathcal{T}$.

\FOR{each relevant separator sphere $\gamma$}

    \STATE Let $X_\gamma
    =
    \{f\in F' : D_f \text{ crosses } \gamma\}.$

    \FOR{each $Q\subseteq X_\gamma$ with $|Q|\le {\min\{\ell,w_d(\ell)\}}$}

        \STATE Set $A'\gets A\cup Q$.

        \STATE Declare all centers in $X_\gamma\setminus Q$ unselected.

        \STATE Partition the clients into
        $P_{\mathrm{in}},P_{\mathrm{out}},P_{\partial}$.

        \STATE Let $F_{\mathrm{in}}$ and $F_{\mathrm{out}}$
        be the noncrossing candidate centers lying completely
        inside and outside $\gamma$, respectively.

        \FOR{each pair $(\ell_{\mathrm{in}},\ell_{\mathrm{out}})$ satisfying
        $
        \ell_{\mathrm{in}},\ell_{\mathrm{out}}
        \le \beta_d\ell,
        \text{ and }
        |Q|+\ell_{\mathrm{in}}+\ell_{\mathrm{out}}\le \ell
        $}

            \STATE
            $\mathcal{T}_{\mathrm{in}}
            \gets
            \textsc{Solve}
            (P_{\mathrm{in}},F_{\mathrm{in}},A',
            \ell_{\mathrm{in}})$.

            \STATE
            $\mathcal{T}_{\mathrm{out}}
            \gets
            \textsc{Solve}
            (P_{\mathrm{out}},F_{\mathrm{out}},A',
            \ell_{\mathrm{out}})$.

            \FOR{each compatible pair of states
            $(\tau_{\mathrm{in}},\tau_{\mathrm{out}})$ from
            $\mathcal{T}_{\mathrm{in}}$ and
            $\mathcal{T}_{\mathrm{out}}$}

                \STATE Combine the contributions of the two states
                for all centers in $A'$.
                \FOR{each rounded bucket vector for the boundary contributions}
                \color{black}

                \STATE Use the bucket-constrained flow to assign
                $P_{\partial}$ to centers in $A'$.

                \IF{the flow instance is infeasible}
                    \STATE \textbf{continue}.
                \ENDIF

                \STATE Add the inside, outside, and boundary
                contributions to obtain the exact counts for
                every center in $A'$.

                \STATE Set $\mathrm{valid}\gets\mathrm{true}$.

                \FOR{each $q\in Q$}
                    \STATE Let $X_q$ be its completed witness cluster; if $X_q=\emptyset$, omit $q$ from this witness and continue.

                    \IF{for some $i\in[c]$,
                    $
                    \frac{|X_q\cap P_i|}{|X_q|}
                    \notin
                    \left[
                    \frac{\alpha_i}{\lambda},
                    \min\{1,\lambda\beta_i\}
                    \right]
                    $}
                        \STATE $\mathrm{valid}\gets\mathrm{false}$.
                    \ENDIF

                \ENDFOR

                \IF{$\mathrm{valid}=\mathrm{false}$}
                    \STATE \textbf{continue}.
                \ENDIF

                \STATE Forget all newly completed centers in $Q$.

                \STATE Round the accumulated color counts of the
                inherited active centers in $A$.

                \STATE Insert the resulting state into $\mathcal{T}$,
                keeping one integral witness for each rounded state.

                \ENDFOR
            \ENDFOR
        \ENDFOR
    \ENDFOR
\ENDFOR

\STATE \textbf{return} $\mathcal{T}$.

\end{algorithmic}
\end{algorithm}

\section{Correctness} \label{correct}
After describing the algorithm, we now prove its correctness. We first handle the base case, where at most one new center can still be chosen, and then prove the recursive case for $\ell>1$. The main idea is to show that if a correct fair solution exists, then at least one branch of the algorithm follows it and is never discarded.

\noindent \textbf{Base Case: when $\ell \leq 1$ .} Consider a recursive call $Sol(P',F',A,\ell)$ with $\ell \leq 1.$

If $\ell=0$, no new center can be selected. Hence, all clients in $P'$ must be assigned to the already active centers in $A$.

If $\ell=1$, the algorithm considers both possibilities: either no new center is selected, or exactly one center $f\in F'$ is selected. In the latter case, every possible choice of $f$ is taken.

For each such choice, let $A''$ be the set of centers available for the assignment. Thus, $A''=A$
if no new center is chosen, and $A''=A\cup\{f\}$ if $f$ is selected. The algorithm then enumerates the bucket values for the color counts contributed by the clients in $P'$ and uses the flow construction to check whether these clients can be assigned to the centers in $A''$. For every feasible flow, test the actual completed cluster of the new center $f$, if it is nonempty, against the relaxed fairness bounds. Discard a witness failing this test, and delete $f$ from this witness if it is unused. Inherited centers in $A$ remain active. Insert each surviving witness under the bucket vector of its exact contributions to $A$, keeping one witness per vector. If both the client set and the available center set are empty, the empty assignment is feasible; if clients remain but no center is available, it is infeasible.
\begin{lemma} \label{base-step}
    Suppose the current base-case call has a radius-$r$ reference assignment using at most $\ell$ new centers, whose complete clusters at new centers are exactly $(\alpha,\beta)$-fair. For inherited active centers, only their contributions from $P'$ are prescribed. Then the algorithm finds a feasible state corresponding to this solution, up to the allowed rounding on the number of color counts.
\end{lemma}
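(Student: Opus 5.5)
The plan is to show that the reference assignment survives every test in the base case, using Lemma~\ref{lem:flow-feasibility-completeness} for the flow step and Lemma~\ref{lem:fairness-preservation-rounding} for the fairness test.

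\textbf{Fixing the branch.} Let $\sigma$ be the reference assignment of $P'$. Two cases arise.
\begin{itemize}
\item If $\sigma$ opens no new center, follow the branch with $A''=A$. This branch exists whether $\ell=0$ or $\ell=1$.
\item Otherwise $\ell=1$ and $\sigma$ opens exactly one $f\in F'$. Follow the branch selecting this $f$, so $A''=A\cup\{f\}$.
\end{itemize}
Every client of $P'$ is assigned by $\sigma$ to a center of $A''$ within distance $r$. Hence every client-to-center edge used by $\sigma$ is present in each colour network. The degenerate cases are immediate: if $P'=\emptyset$, the empty witness is produced. If $P'\neq\emptyset$, then $\sigma$ itself certifies that $A''\neq\emptyset$.

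\textbf{Flow step.} For each $a\in A''$ and colour $i$, let $b_{a,i}$ be the bucket of the exact count $|\sigma^{-1}(a)\cap P_i|$. Zero counts go to the zero bucket. The algorithm enumerates all bucket vectors, so it reaches $(b_{a,i})$. By Lemma~\ref{lem:flow-feasibility-completeness}, the $c$ flow instances for this vector are feasible, since $\sigma$ witnesses them. The algorithm therefore obtains some integral assignment $\sigma'$. Its count $x'_{a,i}$ lies in the same bucket as $\sigma$'s count $x_{a,i}$. By Observation~\ref{obs:thetaerror}, $x_{a,i}/\theta<x'_{a,i}<\theta x_{a,i}$ when $x_{a,i}>0$, and $x'_{a,i}=0$ exactly when $x_{a,i}=0$.

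\textbf{Fairness test at $f$.} This is the only genuine check, and the step I expect to need the most care. The cluster of $f$ under $\sigma$ is complete, because $f$ is new and its ball lies inside the current region. By hypothesis it is exactly $(\alpha,\beta)$-fair. Its counts under $\sigma'$ differ by a factor at most $\theta\le\eta=\theta^{D_d+2}$, and zeros are preserved. Lemma~\ref{lem:fairness-preservation-rounding} therefore applies. It shows the cluster of $f$ under $\sigma'$ is nonempty exactly when it is under $\sigma$, and that it passes the $\lambda$-relaxed test. So the witness is not discarded. If $f$ is unused under $\sigma$, it is unused under $\sigma'$ and is simply deleted, which is consistent with the reference.

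\textbf{Inherited centers and storage.} Centers in $A$ are not tested. Their contributions under $\sigma'$ lie in the same buckets as the prescribed contributions of $\sigma$, so they are within a factor $\theta$ of them. The algorithm stores a witness under exactly this bucket vector. If another witness with the same vector was kept instead, its counts for $A$ are again bucket-equivalent to $\sigma$'s, with one factor $\theta$ of distortion. Its new centers also passed the relaxed test.

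So the table contains a feasible state whose counts for the centers of $A$ are within a factor $\theta$ of $\sigma$'s contributions. This is one of the $D_d+2$ distortion steps budgeted in $\eta$, and it is what ``up to the allowed rounding'' means in the lemma.
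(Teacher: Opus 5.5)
Your proof is correct and follows the same route as the paper: you fix the reference's zero-or-one new center and bucket the exact reference counts. You then invoke Lemma~\ref{lem:flow-feasibility-completeness} for feasibility, and compare the returned integral counts with the reference within a factor $\theta$, both to pass the relaxed test at $f$ and to control the inherited contributions.

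The differences are only in bookkeeping. You pass the fairness check through Lemma~\ref{lem:fairness-preservation-rounding} (using $\theta\le\eta$) instead of Property~2 with $\theta^2\le\lambda$. You also correctly observe that the retained representative lies in the same bucket as the reference count itself, which gives distortion $\theta$ on inherited centers rather than the paper's looser $\theta^2$.
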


\begin{proof}
Choose the same zero-or-one new center as in the reference solution. For every center and color, place its exact count into the corresponding bucket. By Lemma \ref{lem:flow-feasibility-completeness}, the flow for these bucket choices is feasible. Any integral assignment returned for these same buckets has each color count within a factor $\theta$ of the corresponding reference count, while a zero count remains zero. By property \ref{sqmiu}, each color proportion therefore changes by at most a factor $\theta^2\le \lambda$. Hence every newly completed center satisfies the relaxed fairness condition. After table compression, the retained contribution to every inherited center differs from the reference contribution by at most $\theta^2$: one factor for choosing a flow witness and one for keeping a representative with the same returned bucket vector. Zeros remain zero. No completed cluster is changed by this last operation; every retained witness has already passed its fairness test.
\end{proof}

\noindent \textbf{Recursive Step: when $\ell >1$ }

\begin{lemma} \label{recur-step}
{Suppose a call $Sol(P',F',A,\ell)$ has a radius-$r$ reference assignment using at most $\ell>1$ additional centers from $F'$, with complete exactly fair clusters at these new centers. For inherited active centers, only the regional color contributions are prescribed. If this call has remaining recursion height $h$ (a base case has height zero), its returned table contains an integral witness whose contribution to every inherited center-color pair is within a factor $\theta^{h+2}$ of the reference contribution, with zeros preserved. Every completed cluster of this witness is $\lambda$-relaxed fair.}
\end{lemma}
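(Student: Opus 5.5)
We prove the statement by induction on the remaining recursion height $h$, with Lemma~\ref{base-step} as the base case ($h=0$). That lemma gives distortion $\theta^{2}=\theta^{0+2}$ for inherited contributions and relaxed fairness for completed clusters. For the inductive step, fix a call $Sol(P',F',A,\ell)$ with $\ell>1$ and a reference assignment as in the statement. Let $S$ be the set of new centers of the reference that have nonempty clusters; then $|S|\le\ell$. Apply Theorem~\ref{sepexist} to the balls $\{D_f:f\in S\}$. By Lemma~\ref{bddply} this family has ply $\mathcal{O}_d(\epsilon^{-d})$, so we get a sphere $\gamma_0$ crossed by at most $w_d(\ell)$ of these balls, with at most $\beta_d\ell$ of them on each side. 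By Claim~\ref{clm:representative-separators}, the enumerated list contains a sphere $\gamma$ with the same classification of every candidate ball. We follow the branch that uses this $\gamma$, takes $Q=S\cap X_\gamma$, sets $\ell_{in}=|S\cap F_{in}|$ and $\ell_{out}=|S\cap F_{out}|$, and therefore satisfies all the enumeration constraints.

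Next we restrict the reference to the two children. Every reference client assigned to a center in $S\cap F_{in}$ lies in that center's ball, which lies strictly inside $\gamma$, so the client belongs to $P_{in}$; the outside case is symmetric. Every other client is assigned to a center in $A'=A\cup Q$. Hence the reference restricted to $P_{in}$ is a radius-$r$ assignment for $Sol(P_{in},F_{in},A',\ell_{in})$. Its new centers carry complete, exactly fair clusters, and its prescribed contributions go to $A'$. The same holds for the outside child, and the remaining clients of $P_\partial$ go to $A'$. Both children have height at most $h-1$. If some child has $\ell_{in}\le 1$ it is a base case, and Lemma~\ref{base-step} covers it with an even smaller factor. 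By induction, each child's table contains a witness whose contributions to $A'$ are within a factor $\theta^{h+1}$ of the reference, with zeros preserved, and whose completed clusters are $\lambda$-relaxed fair. These two witnesses cover disjoint client sets, so they form a compatible pair.

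For the boundary clients, choose the bucket vector of the reference's exact boundary contributions. By Lemma~\ref{lem:flow-feasibility-completeness} the flow is feasible, and by Observation~\ref{obs:thetaerror} the returned counts are within a factor $\theta$ of the reference. Summing the inside, outside, and boundary counts, Observation~\ref{obs:sumerror} shows that every contribution to a center of $A'$ is within $\theta^{h+1}$ of the reference, with zeros preserved. For each $q\in Q$ the reference cluster is complete and exactly fair. Since $h+1\le D_d+2$, Lemma~\ref{lem:fairness-preservation-rounding} shows that $q$ passes the relaxed fairness test. If the witness cluster of $q$ is empty, the reference cluster is empty as well, and $q$ is simply omitted. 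So this branch is not discarded, and every completed cluster of the resulting witness is $\lambda$-relaxed fair. Finally, table compression keeps some witness with the same bucket vector on $A$. By Observation~\ref{obs:thetaerror} it costs one more factor $\theta$ and preserves zeros, which gives $\theta^{h+2}$. Compression does not change the completed clusters of the retained witness, which were already tested.

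The main obstacle is the exponent bookkeeping, because the children's guarantee already includes their own compression factor. A naive count gives $\theta^{(h-1)+2}\cdot\theta\cdot\theta$, which is too much. We resolve this by using the induction bound $\theta^{h+1}$ for the children directly. The boundary flow error is a separate term in the sum rather than a multiplicative factor on the children's counts, so Observation~\ref{obs:sumerror} takes the maximum, $\theta^{h+1}$, rather than a product. Compression then adds exactly one factor. We also need the height bound $h\le D_d$, so that $\theta^{h+2}\le\lambda^{1/4}$ and Lemma~\ref{lem:fairness-preservation-rounding} applies at every level.
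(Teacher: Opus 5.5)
Your proposal is correct and follows the paper's proof essentially step for step. It runs induction on the remaining height with Lemma~\ref{base-step} as the base, applies the separator to the used new centers $S$ and transfers it to an enumerated sphere via Claim~\ref{clm:representative-separators}, takes $Q$ as the crossing centers of $S$, uses child witnesses at distortion $\theta^{h+1}$, lets the boundary flow enter as an additive summand so that Observation~\ref{obs:sumerror} yields $\theta^{h+1}$ rather than a product, and pays one extra factor $\theta$ for compression; the only cosmetic difference is that the paper handles $S=\emptyset$ separately (any representative sphere, $Q=\emptyset$, zero budgets), a case your application of Theorem~\ref{sepexist} to an empty family covers trivially.
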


\begin{proof}
We prove the stated invariant by induction on the remaining recursion height. The base case follows from Lemma~\ref{base-step} and its proof.

For $\ell>1$, let $S\subseteq F'$ be the new centers used by the reference assignment, after deleting unused centers. If $S=\emptyset$, choose any representative sphere and take $Q=\emptyset$ and both child budgets zero. Otherwise, apply Theorem~\ref{sepexist} to the $|S|\leq\ell$ selected balls. Claim~\ref{clm:representative-separators} supplies an enumerated sphere with the same classifications. Choose $Q$ to be precisely the crossing centers of $S$, and choose the child budgets to be the numbers of selected noncrossing centers on the respective sides. These guesses satisfy $|Q|\leq w_d(\ell)$, both child budgets are at most $\beta_d\ell$, and their sum with $|Q|$ is at most $\ell$.

Restrict the reference assignment to the client sets defined by this representative sphere. Every noncrossing selected ball serves clients only on its own side. Thus, each child reference assignment completes exactly fair clusters at its own new centers, while centers in $A\cup Q$ remain inherited active centers. The induction hypothesis gives child witnesses whose active counts differ from these reference contributions by at most $\theta^{h+1}$. Enumerate the buckets of the reference boundary contributions. Lemma~\ref{lem:flow-feasibility-completeness} supplies an integral boundary witness with distortion at most $\theta$. By Observation~\ref{obs:sumerror}, adding the three contributions has distortion at most $\theta^{h+1}$, with zeros preserved. For $q\in Q$, this sum is the complete cluster. The reference cluster is nonempty and exactly fair, so the witness cluster is also nonempty and passes the relaxed fairness test by Lemma~\ref{lem:fairness-preservation-rounding}. All clusters completed within the children already passed this test and are kept unchanged. For inherited centers in $A$, keeping one witness with the same returned bucket vector introduces at most one further factor $\theta$, by Observation~\ref{obs:thetaerror}. This proves the bound $\theta^{h+2}$. Even if many witnesses are inserted under this vector, the final witness and the reference-following candidate lie in the same bucket, so only one such factor is needed. Since $h\leq D_d$, the total count distortion is at most $\theta^{D_d+2}$, as used in the fairness lemma.
\end{proof}

Now comes our main theorem- 

\begin{theorem}
Fix $d\geq 2$, $c\geq 1$, $0<\varepsilon\leq 1$, and $\lambda>1$. Let $P\subseteq\mathbb{R}^d$ be partitioned into $c$ color classes with fairness bounds $(\alpha_i,\beta_i)$. If there exists an exactly $(\alpha,\beta)$-fair clustering using at most $k$ continuous centers, then the algorithm returns a clustering using at most $k$ centers whose radius is at
most $(1+\epsilon)R^\star,$
where $R^\star$ is the optimum radius of an exactly fair solution. Moreover, every nonempty returned cluster \(X\) satisfies $$\frac{\alpha_i}{\lambda} \leq \frac{|X\cap P_i|}{|X|}
\leq \min\{1,\lambda\beta_i\}\text{ for every }i\in[c].$$  
\end{theorem}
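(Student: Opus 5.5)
The proof has two halves: soundness (anything returned is valid) and completeness (the correct radius guess is not missed). The plan is to argue soundness first and then completeness, using the outer loop over increasing candidate radii.

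\textbf{Soundness.} Every clustering the algorithm returns comes from a table entry of $Sol(P,F,\emptyset,k)$ with no active centers. I will trace its integral witness through the recursion.

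\begin{itemize}
\item \emph{Radius.} Clients are only ever assigned along flow edges with $\|p-a\|_2\le r$, and $r=R+\varepsilon R/20<(1+\varepsilon)R$.
\item \emph{Fairness.} Each center is selected exactly once: either as a crossing center in some $Q$, or as a new center in a base case. It stops being active exactly at the call where it was selected. At that point its full cluster is the sum of the inside, outside and boundary contributions (or its entire base-case contribution), and this cluster is tested explicitly against the $\lambda$-relaxed bounds. Table compression never modifies a stored witness, so tested clusters stay as tested.
\item \emph{Center count.} The budgets satisfy $|Q|+\ell_{in}+\ell_{out}\le\ell$ at every node, and base cases use at most $\ell$ new centers. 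By induction, at most $k$ centers are used in total.
\item \emph{Every point assigned.} $P_{in},P_{out},P_\partial$ partition $P'$, and each flow assigns every client exactly once.
\end{itemize}

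\textbf{Completeness.} Suppose an exactly fair solution with $k$ centers exists, and let $R^\star$ be the optimum radius. By Property~\ref{minpoints}, moving each center to the minimum enclosing ball of its cluster preserves fairness and does not increase the radius. So $R^\star\in\mathcal{R}$. When $R=R^\star$:

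\begin{enumerate}
\item The component test and the diameter test both pass, by the observations on $G_R$.
\item Lemma~\ref{lem:grid-snapping} gives an exactly fair radius-$r$ reference solution with at most $k$ centers, all in $F$.
\item Apply Lemma~\ref{recur-step} (or Lemma~\ref{base-step} if $k\le1$) at the root, where $A=\emptyset$. It yields a witness whose completed clusters are $\lambda$-relaxed fair. Since no inherited centers remain, this is a feasible state with no active centers, so the outer loop returns it.
\end{enumerate}

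The loop scans radii in increasing order. It therefore either returns at this guess or at an earlier, smaller $R$. In both cases soundness bounds the radius by $(1+\varepsilon)R\le(1+\varepsilon)R^\star$. For $R=0$, the explicit test returns only exactly fair clusters, and merging co-located clusters is justified by Property~\ref{unifair}.

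\textbf{Main obstacle: the distortion budget.} Lemma~\ref{recur-step} bounds the count distortion by $\theta^{h+2}$ with $h\le D_d$. This requires the recursion height to be at most $D_d$. The height bound follows because child budgets shrink by the factor $\beta_d$, and children with budget at most $1$ are base cases.

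Given the height bound, $\theta^{D_d+2}=\lambda^{1/(4)}$, and Lemma~\ref{lem:fairness-preservation-rounding} converts this count distortion into the stated relaxed proportions via Property~\ref{sqmiu}, since $\lambda^{1/2}\le\lambda$. I will also check the degenerate case $S=\emptyset$ handled inside that lemma. I expect this bookkeeping of the distortion to be where care is needed, but it uses only results already established.
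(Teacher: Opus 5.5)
Your proposal is correct and follows essentially the same route as the paper. Soundness comes from three facts: the relaxed-fairness test is applied to each cluster at the call where its center is selected and is never undone by compression; every flow edge has length at most $r$; and every node respects the budget $|Q|+\ell_{\mathrm{in}}+\ell_{\mathrm{out}}\le\ell$. Completeness comes from $R^\star\in\mathcal{R}$, Lemma~\ref{lem:grid-snapping}, and Lemma~\ref{recur-step} (or Lemma~\ref{base-step}) at the root, with an earlier return at a smaller radius covered by soundness. Your explicit split into soundness and completeness, and your checks of the component, diameter and zero-radius cases, are only slightly more detailed bookkeeping than the paper gives.
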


\begin{proof}
The optimum radius \(R^\star\) is among the radii considered by the algorithm. If $R^\star=0$, the separate zero-radius test returns an exactly fair solution. Hence assume $R^\star>0$. For this radius, Lemma~\ref{lem:grid-snapping} gives an exactly fair solution using candidate centers from $F$, with radius $r<(1+\epsilon) R^\star$. Now consider the root call $Sol(P, F,\emptyset,k).$ Lemma \ref{base-step} shows that a correct branch is preserved in every base case. For a recursive call with $\ell>1$, Lemma \ref{recur-step} shows that the algorithm considers a separator, a set of selected crossing centers, and inside and outside budgets that agree with the exact fair solution. Therefore, by induction over the recursion, at least one feasible branch survives up to the root. Every client in this branch is assigned exactly once, either in one of the recursive subproblems or through the boundary flow. An assignment is allowed only if the client is within distance $r$ of its assigned center. The center-budget conditions also ensure that at most $k$ centers are selected.

Whenever the complete cluster of a center becomes known, its fairness is checked. These tests use the actual integral witness counts. Later compression only chooses between already valid witnesses and never modifies a completed cluster, so every returned witness is sound. For the witness supplied by the completeness proof, Lemma~\ref{lem:fairness-preservation-rounding} ensures that these tests succeed. Hence every nonempty completed cluster satisfies
$$\frac{\alpha_i}{\lambda} \leq \frac{|X\cap P_i|}{|X|} \leq
\min\{1,\lambda\beta_i\}\text{ for every }i\in[c].$$ At the root, there are no remaining active centers, so the surviving branch gives a complete clustering. It uses at most $k$ centers, has radius at most $r<(1+\varepsilon)R^\star,$ and satisfies the required relaxed fairness condition for every cluster. If the algorithm returns earlier at a candidate radius $R_0<R^\star$, the same soundness argument applies and its radius is at most $(1+\varepsilon)R_0\leq(1+\varepsilon)R^\star$.
\end{proof}

\section{Runtime Analysis}\label{sec:runtime}
We first recall the two bounds needed for the running-time analysis. Since each recursive child has center budget at most \(\beta_d\ell\), where $\beta_d=\frac{d+1}{d+2}<1,$ the recursion depth is

$
D_d=
\left\lceil
\log_{1/\beta_d}k
\right\rceil+1
=
\mathcal{O}_d(\log(k+1)).
$ Moreover, at a level with budget \(\ell\), at most $
\mathcal{O}_d\left(\frac{\ell^{1-1/d}}{\varepsilon}\right)
$ new crossing centers become active. Summing this along one root-to-leaf path gives $H_d= \mathcal{O}_d\left(
\frac{k^{1-1/d}}{\varepsilon}
\right).$ Thus, the recursion has depth \(\mathcal{O}_d(\log k)\), and at any time at most \(H_d\) centers need to be kept active. We now count the different contributions to the running time.

\noindent \textbf{Radius Enumeration.}
By the radius-enumeration step in Section~4, the algorithm considers $\mathcal{O}(n^{d+1})$
candidate radii. For fixed $d$, this is $n^{\mathcal{O}(d)}$.
\medskip

\noindent \textbf{Candidate Centers.}
By Lemma~\ref{lem:grid-size}, the number of grid candidate centers is $$M
=
\mathcal{O}\left(
\frac{k^{d+1}d^{d/2}}{\varepsilon^d}
\right)
=
\left(\frac{k}{\varepsilon}\right)^{\mathcal{O}(d)} \text{ for fixed } d.$$
\medskip

\noindent \textbf{Separator Types.}
Claim~\ref{clm:representative-separators} gives $M^{\mathcal{O}(d)}$ representative separator types at one recursive node. Since the recursion depth is at most $D_d$, their contribution along a recursive path is $M^{O_d(D_d)}.$
\medskip

\noindent \textbf{Crossing-Center Guesses.}
At a node with budget $\ell$, at most $w_d(\ell)
=
\mathcal{O}_d\left(
\frac{\ell^{1-1/d}}{\varepsilon}
\right)$
selected balls cross the separator. Hence the number of possible choices for the selected crossing set $Q$ is at most
$\sum_{j=0}^{w_d(\ell)}
\binom{M}{j}
\le
M^{\mathcal{O}_d(\ell^{1-1/d}/\varepsilon)}.$
Along one root-to-leaf path, these exponents sum to $\mathcal{O}_d(H_d)$.
Therefore, the total contribution of the crossing-set guesses is
$M^{\mathcal{O}_d(H_d)}.$ The two recursive children are solved separately, so their running times are added. The cost of combining their DP tables is included in the state-table bound below.

For clarity, let $B(\ell)$ be the number of separator, crossing-set, and budget guesses at a node. Then $B(\ell)\leq M^{\mathcal{O}(d)}(k+1)^2 M^{\mathcal{O}_d(\ell^{1-1/d}/\varepsilon)}$. If $W$ is a uniform bound on local table processing, the recursive cost satisfies $T(\ell)\leq B(\ell)(2T(\lfloor\beta_d\ell\rfloor)+W)$ for $\ell>1$, with the base-case cost included in $W$. Unrolling this bound counts all calls by the product of these branching factors along a path, with an additional $2^{\mathcal{O}(D_d)}$ factor. Local table processing is performed once per call; its state factor is not multiplied once per recursion level.

\medskip
\noindent \textbf{Budget guesses.}
At each recursive level, there are at most $\mathcal{O}(k^2)$
choices for $(\ell_{\mathrm{in}},\ell_{\mathrm{out}})$. Over depth $D_d$,
this contributes $k^{\mathcal{O}(D_d)}.$

\medskip
\noindent \textbf{Rounded DP states.}
Each active center has $c$ color-count coordinates, and each coordinate has at most $L$ possible buckets. Since at most $H_d+1$ centers are active, the number of rounded count states is at most $L^{c(H_d+1)}.$
Combining child tables and enumerating the bucket vectors used by the flow step changes only the constant in the exponent. Hence the table-processing cost is $L^{\mathcal{O}(cH_d)}k^{\mathcal{O}(1)}.$
Each individual flow computation takes polynomial time. The base case considers at most $M+1$ choices for its zero-or-one new center; this additional factor is absorbed into the power of $M$ below.

Combining all these factors gives $T_{c,d}
=
n^{\mathcal{O}(d)}
M^{\mathcal{O}_d(H_d+D_d)}
L^{\mathcal{O}(cH_d)}
k^{\mathcal{O}(D_d)},$
where
$$
H_d
=
\mathcal{O}_d\left(
\frac{k^{1-1/d}}{\epsilon}
\right),
\text{ }
D_d=\mathcal{O}_d(\log(k+1)), \text{ and }
L
=
\mathcal{O}\left(
1+
\frac{D_d\log n}{\log\lambda}
\right).
$$

The factor involving $\log n$ can be absorbed into a polynomial factor in $n$ together with a parameter-dependent factor.

Indeed, for $a\geq1$ and $n\geq1$, $(1+\log n)^a\leq n(a+1)^a$. Taking $a=\mathcal{O}(c(H_d+1))$ and using $L=\mathcal{O}((1+D_d/\log\lambda)(1+\log n))$ gives $L^a\leq n[\mathcal{O}((1+D_d/\log\lambda)(a+1))]^a$. The exponent of $n$ remains independent of $k$. Therefore, for fixed
$d$,
$$
T_{c,d}
=
n^{\mathcal{O}(d)}
\left(\frac{k}{\varepsilon}\right)^{
\mathcal{O}_d(k^{1-1/d}/\varepsilon)
}
\left(
\frac{c\,k^{1-1/d}}{\varepsilon}
\left(1+\frac{\log(k+1)}{\log\lambda}\right)
\right)^{
\mathcal{O}_d(c\,k^{1-1/d}/\varepsilon)
}.
$$
In particular, $T_{c,d}
=
n^{\mathcal{O}(d)}
\left(\frac{k}{\varepsilon}\right)^{
\mathcal{O}_{c,d,\lambda}(k^{1-1/d}/\varepsilon)
}.$

\section{Conclusion}\label{sec:conclusion}
We studied the group -fair $k$-center problem in fixed dimension, where the input points are divided into disjoint classes, and every cluster must satisfy given fairness conditions. By combining grid-based discretization, a sphere-separator method, and a recursive algorithm, we obtain a $(1+\epsilon)$- approximation for the radius while allowing a multiplicative $\lambda$ -relaxation of the fairness constraints. The main idea is to use a separator to divide the problem into smaller parts. During the recursion, we keep only a small number of active centers and store their color counts using multiplicative buckets. This reduces the number of recursion states while still preserving fairness within the allowed relaxation. 

Compared with earlier work on fair clustering in more general metric spaces, we studied the problem on Euclidean space to obtain better dependence on $k$. As a direction for further study, it would be interesting to remove the fairness relaxation, handle overlapping color groups, and study whether a similar subexponential approach can be extended to more general geometric settings.

\section{AI-Assisted Preparation}
The initial research ideas and mathematical intuition originated with the authors. Through interactive discussions, OpenAI's ChatGPT 6.1 Sol assisted in formalizing these ideas, including the application of the geometric sphere-separator lemma to the proposed algorithmic framework, addressing gaps in the arguments, and exploring possible algorithmic improvements. The manuscript was composed by the authors, with AI assistance in refining selected passages. All mathematical arguments and algorithmic claims were independently verified by the authors, who take full responsibility for the final content.

\bibliographystyle{alpha}
\bibliography{arxiv}


\end{document}